%% file: main_v7.tex
\documentclass[aps,prx,reprint,superscriptaddress,longbibliography,floatfix]{revtex4-2}
\pdfoutput=1
\usepackage[T1]{fontenc}
\usepackage[utf8]{inputenc}
\usepackage{amsmath,amssymb,mathtools,booktabs,array,microtype,graphicx}
\usepackage[colorlinks=true,hyperindex,breaklinks,
            linkcolor=blue,urlcolor=blue,citecolor=blue]{hyperref}
\usepackage{xurl}
\usepackage{multirow}
\usepackage{tikz}
\usetikzlibrary{arrows.meta}
\hypersetup{pdftitle={Designing Group-Valued Codes with Full Regular Low-Weight Bases},pdfauthor={Jong Yeon Lee}}
\newtheorem{theorem}{Theorem}
\newtheorem{proposition}[theorem]{Proposition}

\newtheorem{definition}[theorem]{Definition}
\newenvironment{proof}{\par\noindent\textit{Proof.}\ }{\hfill$\square$\par}
\newcommand{\F}{\mathbb F_2}
\newcommand{\R}{\mathcal R}
\newcommand{\LX}{\mathcal L_X}
\newcommand{\LZ}{\mathcal L_Z}
\DeclareMathOperator{\im}{im}
\DeclareMathOperator{\rank}{rank}
\DeclareMathOperator{\wt}{wt}
\DeclareMathOperator{\aug}{aug}
\DeclareMathOperator{\adj}{adj}
\DeclareMathOperator{\Tr}{Tr}

\definecolor{tablehighlightblue}{RGB}{225,239,255}

\begin{document}
\title{Designing Group-Valued Codes with Full Regular Low-Weight Bases}
\author{Jong Yeon Lee}
\email{jongyeon@illinois.edu}
\affiliation{Department of Physics, Illinois Quantum Information Science and Technology Center, and Institute of Condensed Matter Theory, The Grainger College of Engineering, University of Illinois at Urbana-Champaign, Urbana, Illinois 61801, USA}
\affiliation{Korea Institute for Advanced Study, Seoul 02455, South Korea}
\begin{abstract}
Efficient fault-tolerant quantum computing architectures benefit from low-weight logical operators.
However, reported canonical bases for high-rate, high-distance codes can be several times heavier than the code distance, and fundamental obstructions can prevent minimum-weight logical operators from forming a complete canonical basis.
To address this issue, we develop design principles for group-valued quantum error-correcting codes with low-weight full regular canonical logical bases, generated by translating a set of seed operators.
We introduce frame width $f$, the smallest weight cutoff that permits a complete canonical logical basis, as an additional code parameter alongside distance $d$ and check weight $w$.
Cataloguing these parameters together with explicit logical bases can inform resource estimates and the compilation of fault-tolerant logical operations.

First, we identify a structural obstruction: for fully populated binary monomial CSS checks over a group of odd order, full regularity forces distance two, while for groups of power-of-two order, full row rank of both check matrices after replacing every group element by $1$ guarantees full regularity.
Second, we demonstrate our principles through explicit high-rate lifted-product (LP) and pair-partition (PP) codes: an $[[1088,128,22]]$ LP code with $f=22$ and an $[[1024,256,24]]$ PP code with $f \leq 27$, both at $w=10$.
At $w=11$, our PP constructions attain either $d=25$ with $f \leq 27$ or $d=24$ with the proved optimum $f=25$, demonstrating both a tradeoff among these parameters and a strict separation between distance and frame width.
Finally, we propose a method combining symmetry reduction and stabilizer-based pruning for exhaustive distance certification, and use it to efficiently establish exact distances for LP examples with more than a thousand qubits.
\end{abstract}
\maketitle

\section{Introduction}
The physical support of logical operators affects the resources required for fault-tolerant quantum computation.
The distance $d$ specifies the minimum weight of a nontrivial logical operator, whereas directly supplying canonical logical pairs for every encoded qubit requires a complete basis.
Reported canonical bases for high-rate codes can have maximum weights several times the available distance upper bounds, even after stabilizer-coset reduction~\cite[Table~I]{Lee2026Spectroscopy}.
We therefore ask how to design codes together with complete canonical bases whose operator weights can be controlled.

Logical Pauli measurements make this requirement concrete.
Surgery techniques for quantum low-density parity-check (LDPC) codes~\cite{CohenEtAl2022Surgery,CrossEtAl2024Surgery} and homomorphic or homological measurements~\cite{HuangEtAl2023Homomorphic,IdeEtAl2025Homological} couple ancillas to logical supports through ancilla constructions that preserve the required distance.
The weight of the measured operator affects the ancilla spacetime cost~\cite{WilliamsonYoder2026Gauging, SwaroopEtAl2026Adapters, YuanEtAl2026Parsimonious}.
Group symmetry offers a further design objective: in a full regular basis, all canonical pairs arise by translating a few seed pairs, allowing their coupling patterns to be reused~\cite{WebsterSmithCohen2025}.
We seek construction principles that guarantee this organization while controlling every basis operator's weight.

\input{main_v7_parts/figures/module_spectrum}

Group-algebra structure appears in various constructions, from generalized and bivariate bicycle codes to lifted product (LP) and pair-partition (PP) codes~\cite{KovalevPryadko2013,Bravyi2024,LinPryadko2024, PanteleevKalachev2021,PanteleevKalachev2022,BreuckmannEberhardt2021,LeeEtAl2026PP}.
Explicit logical bases support structured operations in hypergraph products~\cite{QuintavalleWebsterVasmer2023,XuEtAl2024} and bicycle codes~\cite{EberhardtSteffan2025,RabetiMahdavifar2026,Davenport2026}.
Logical spectroscopy constructs complete canonically paired logical bases for Abelian LP codes~\cite{Lee2026Spectroscopy}; its pairing framework also extends to PP codes~\cite{LeeEtAl2026PP}. 
Zheng et al. construct canonical logical bases for cyclic sectors of LP codes~\cite{ZhengEtAl2026CanonicalLP}.
Bhardwaj et al. and Hong construct non-Abelian LP codes with explicit logical bases generated from seed operators by group actions~\cite{BhardwajEtAl2026Processors,Hong2026ZSZ}.
These developments motivate a further design question: how can a code admit a complete canonical basis organized into regular orbits while keeping every basis operator's weight small, ideally close to distance?

\subsection{Summary of Results}

To quantify this objective, we use \emph{frame width} $f$, the smallest weight cutoff that permits a complete canonical logical basis.
The same idea was introduced for CSS bases as optimized conjugate-basis width~\cite[Sec.~IV.D]{Lee2026Spectroscopy}.
Together with distance $d$, check weight $w$, and explicit logical supports, it provides information relevant to logical measurements and fault-tolerant resource estimates.

Full regularity first imposes a structural constraint.
For Abelian groups of odd order, unequal logical multiplicities across symmetry components obstruct full regularity (Fig.~\ref{fig:spectrum}).
For fully populated binary monomial CSS checks at odd order, full regularity forces $d=2$ when $k>0$.
This obstruction is specific to the binary construction; nonbinary coefficients can avoid it.
For a group of power-of-two order, full row rank of both checks after replacing group elements by $1$ and summing coefficients guarantees full regularity (Theorem~\ref{thm:free}).
We can therefore choose zero patterns and coefficients that ensure complete regular orbits before searching the group labels, subject to CSS commutation.

Our design strategy first sets the encoding rate $k/n$ through the matrix dimensions: those of the classical constituents for LP codes and of the group-valued check matrices for PP codes.
Under the corresponding full-rank augmentation conditions, these dimensions fix the rate and guarantee full regularity over a finite $2$-group.
For a fixed coefficient field, the group order then sets the blocklength.
Within the resulting commuting families, we seek large distance under prescribed limits on check weight and reported frame width $f_{\mathrm{rep}}$.

In general, sparse checks do not imply sparse logical operators: solving the check equations and imposing canonical pairing can make logical operators much heavier.
We choose families that control both steps.
For our explicit LP and PP frames, we first specify the $X$- and $Z$-type seeds on a common qubit block.
We then choose their components on disjoint sets of other blocks so that both operators commute with every stabilizer while maintaining their shared support.
Sparse corrections keep the seeds low-weight, while their disjoint supports ensure that only the common block contributes to the pairings between translates.
The LP arrangement gives canonical pairing directly; for PP codes, we identify additional constraints that allow canonical pairing without increasing the seed weights (Fig.~\ref{fig:unified-frames}).

In addition, quaternary coefficients enlarge the useful design space~\cite{Kasai_2012,panteleev2022asymptoticallygoodquantumlocally,OkadaKasai2026LWCB}.
They allow fully populated matrices to satisfy the rank conditions for full regularity, while their additional cancellation patterns help keep the logical corrections sparse.
These choices must also control the physical weights, since quaternary expansion can increase binary check weight.
For each candidate, we therefore evaluate the binary check weights and the physical weights of the normalized seeds.
The largest seed weight is the reported frame width, because translation preserves weight, and supplies an upper bound on $f$.
Only candidates satisfying the prescribed weight limits are retained.


\begin{table}[!t]
\centering\small\setlength{\tabcolsep}{4pt}
\begin{tabular}{@{}cclrrr@{}}
\toprule
Family & Field & $[[n,k,d]]$ & $w$ & $w_c$ & $f_{\mathrm{rep}}$ \\
\midrule
\multirow{9}{*}{\shortstack{LP\\(rate $2/17$)}} & $\mathbb F_4$ & $[[476,56,14]]$ & $10$ & $6$ & $20$ \\
 & $\mathbb F_4$ & $[[544,64,14]]$ & $8$ & $5$ & $22$ \\
 & $\mathbb F_4$ & $[[544,64,14]]$ & $9$ & $6$ & $16$ \\
 & $\mathbb F_4$ & $[[544,64,15]]$ & $10$ & $6$ & $22$ \\
 & $\mathbb F_2$ & $[[1088,128,12]]$ & $8$ & $5$ & $^\star 12$ \\
 & $\mathbb F_4$ & $[[1088,128,16]]$ & $8$ & $5$ & $26$ \\
 & $\mathbb F_4$ & $[[1088,128,16]]$ & $9$ & $5$ & $18$ \\
 & $\mathbb F_4$ & $[[1088,128,20]]$ & $10$ & $6$ & $22$ \\
 & $\mathbb F_4$ & $[[1088,128,22]]$ & $10$ & $6$ & $^\star 22$ \\
\midrule
\multirow{15}{*}{\shortstack{PP\\(rate $1/4$)}} & $\mathbb F_4$ & $[[256,64,12]]$ & $10$ & $5$ & $19$ \\
 & $\mathbb F_4$ & $[[512,128,11]]$ & $8$ & $4$ & $16$ \\
 & $\mathbb F_4$ & $[[512,128,19]]$ & $9$ & $4$ & $30$ \\
 & $\mathbb F_4$ & $[[512,128,19]]$ & $10$ & $4$ & $25$ \\
 & $\mathbb F_4$ & $[[512,128,19]]$ & $11$ & $5$ & $24$ \\
 & $\mathbb F_4$ & $[[512,128,20]]$ & $10$ & $5$ & $31$ \\
 & $\mathbb F_4$ & $[[512,128,20]]$ & $11$ & $5$ & $28$ \\
 & $\mathbb F_4$ & $[[768,192,13]]$ & $8$ & $4$ & $20$ \\
 & $\mathbb F_4$ & $[[768,192,20]]$ & $9$ & $4$ & $30$ \\
 & $\mathbb F_4$ & $[[768,192,22]]$ & $10$ & $5$ & $29$ \\
 & $\mathbb F_4$ & $[[1024,256,12]]$ & $8$ & $4$ & $15$ \\
 & $\mathbb F_4$ & $[[1024,256,18]]$ & $9$ & $4$ & $26$ \\
 & $\mathbb F_4$ & $[[1024,256,24]]$ & $10$ & $4$ & $27$ \\
 & $\mathbb F_4$ & $[[1024,256,24]]$ & $11$ & $5$ & $^\star 25$ \\
 & $\mathbb F_4$ & $[[1024,256,25]]$ & $11$ & $5$ & $27$ \\
\bottomrule
\end{tabular}
\caption{ {\bf Selected codes.} We present LP and PP codes with full regular canonical frames, ordered within each family by length, reported distance, and check weight.
All listed distances are exact.
$w$ and $w_c$ are the maximum check and column weights per CSS sector; $f_{\mathrm{rep}}$ is the reported frame width and bounds $f$ from above.
A star marks $f_{\mathrm{rep}}=f$.
LP periods are $32$ over $\mathbb F_2$ and $7$, $8$, or $16$ over $\mathbb F_4$; PP periods are $16$, $32$, $48$, and $64$ at lengths $256$, $512$, $768$, and $1024$.
For each quaternary LP check, we choose an independent pair of binary rows that minimizes their maximum weight.
Construction details appear in Appendix~\ref{sec:construction-catalogue}; all inputs and frames are supplied in Ref.~\cite{Lee2026RegularFramesRepository}.}
\label{tab:code-comparison}
\label{tab:lp-codes}
\label{tab:quaternary}
\end{table}

We vary the remaining group labels and coefficients to search for codes with larger distance.
Building on logical spectroscopy's distance filters~\cite{Lee2026Spectroscopy}, searches on smaller constituent and quotient codes reject candidates whose distance upper bounds fall below the target.
For the tabulated codes, exhaustive full-code searches exclude every logical operator lighter than a known nontrivial witness.
The construction supplies an explicit canonical frame; the search selects codes with larger distance, which exhaustive exclusions then certify.

Our constructions give explicit examples combining substantial distance with low-weight complete canonical bases.
At check weight $w=10$, we obtain an $[[1088,128,22]]$ LP code with proved optimal frame width $22$ and an $[[1024,256,24]]$ PP code with reported frame width $27$.
Within the same PP length and rate, allowing $w=11$ yields either distance $25$ with reported frame width $27$, or distance $24$ with the proved optimum $f=25$.
These examples illustrate how check weight can be exchanged for distance or for a narrower canonical frame.
The distance-$24$, width-$25$ PP example also establishes a fundamental distinction between finding a shortest logical operator and supplying a complete logical basis.
Although weight-$24$ logical operators exist, even the collection of all such operators fails to span the logical space.
Every complete canonical basis must therefore contain an operator of weight at least $25$.
Our full regular frame attains this bound, proving $f=25$. 

The distance-$20$ LP example exhibits a different limitation.
Each logical sector separately has a complete regular basis consisting of weight-$20$ operators.
Nevertheless, these minimum-weight operators cannot be arranged into a full regular canonical frame under the prescribed $C_{16}$ action: the required pairings cannot all be satisfied simultaneously.
Every such frame has width at least $21$, while our construction achieves $22$.
Here the obstruction concerns canonical pairing compatible with the specified symmetry; the unrestricted frame width remains unresolved.
Thus low-weight spanning and low-weight canonical pairing are separate design requirements, even when full regularity is already guaranteed.

Sections~\ref{sec:obstruction}--\ref{sec:modular} establish the existence and seed-selection criteria; Secs.~\ref{sec:lp}--\ref{sec:binary} apply them to LP and PP codes.
Appendix~\ref{sec:distance-algorithm} describes the exhaustive searches and physical witnesses establishing the tabulated distance bounds.

\section{Symmetry obstruction}
\label{sec:obstruction}
\subsection{From translated operators to a complete basis}
Suppose the physical qubits are arranged in $L$ blocks of $P=|G|$ qubits, with the qubits in each block labeled by the elements of $G$.
The group action permutes these $P$ labels within every block.
Starting from $q$ pairs of seed operators therefore produces $qP$ candidate logical pairs.
Full regularity means that these translates are independent modulo stabilizers, canonically paired, and account for every logical qubit.
The algebra below tests precisely these three requirements.

Set $\R=\F[G]$.
We use columns of free right $\R$-modules, with matrices acting by left multiplication.
Simultaneous right multiplication by a group element therefore permutes the physical qubits and preserves the check spaces.
For $a=\sum_g a_g g$, write $\bar a=\sum_g a_g g^{-1}$ and $\aug(a)=\sum_g a_g$.
Set $M^\dagger=\overline M^{\mathsf T}$.
The involution reverses multiplication order, so these conventions also apply to non-Abelian groups.

Let $H_X\in\R^{r_X\times L}$ and $H_Z\in\R^{r_Z\times L}$ satisfy $H_XH_Z^\dagger=0$.
Write $\widehat H_X,\widehat H_Z$ for their binary regular-representation expansions, whose rows specify the $X$- and $Z$-type stabilizer generators.
We use $[[n,k,d]]$ for a code on $n$ physical qubits encoding $k$ logical qubits, with $d$ the minimum weight of a nontrivial logical Pauli operator.
Here $n=LP$ and
\begin{align}
k=LP-\rank_{\F}\widehat H_X-\rank_{\F}\widehat H_Z. \label{eq:binary-dimension}
\end{align}
For the Calderbank--Shor--Steane (CSS) construction~\cite{CalderbankShor1996,Steane1996}, the logical quotients are
\begin{align}
\LX&=\ker H_Z/\im H_X^\dagger, \\
\LZ&=\ker H_X/\im H_Z^\dagger. \label{eq:logicals}
\end{align}
These quotients identify two operators when they differ by a stabilizer.

Full regularity is defined relative to the chosen right $G$-action; the binary CSS matrices alone do not specify this action.
For a binary group-algebra vector $v$, $\wt(v)$ counts its nonzero coefficients in the physical group-element basis.
When $k>0$, the $Z$-distance $d_Z$ is the minimum weight in $\ker H_X\setminus\im H_Z^\dagger$; define $d_X$ by exchanging the sectors, and set $d=\min(d_X,d_Z)$.

The binary logical pairing $B(x,z)=\widehat x^{\mathsf T}\widehat z$, where $\widehat x$ and $\widehat z$ are the binary coefficient vectors, is the nondegenerate CSS commutation pairing~\cite{Gottesman1997}.
Equivalently, $B(x,z)$ is the coefficient of $1_G$ in $x^\dagger z$.
The group-algebra pairing $x^\dagger z$ records the pairings with every translate at once.
It descends to logical classes because $H_Xz=H_Zx=0$, and satisfies $(xa)^\dagger(zb)=\bar a(x^\dagger z)b$.
In particular, its first argument is conjugate-linear rather than linear over a noncommutative group algebra.

\begin{samepage}
\begin{definition}
\label{def:frame}
A \emph{full regular canonical frame} for a code with $k=qP$ consists of seed columns $X=(x_1,\ldots,x_q)$ and $Z=(z_1,\ldots,z_q)$ satisfying
\begin{align}
H_ZX=0,\qquad H_XZ=0,\qquad X^\dagger Z=I_q. \label{eq:frame}
\end{align}
The translates of these seeds form a \emph{full regular logical basis}, with canonical $X$--$Z$ pairing.
We use \emph{frame} for the seed data and \emph{basis} for their translates.
\end{definition}
\end{samepage}
Indeed, $B(x_i g,z_j h)=\delta_{ij}\delta_{g,h}$ for $g,h\in G$.
The word canonical refers to the pairing, not to a unique choice of basis.

A copy of $\R$ describes one logical seed together with all its independent translates.
The logical module is \emph{free} when it is a direct sum of such copies.

\begin{proposition}
\label{prop:module}
A full regular canonical frame exists if and only if $\LZ\cong\R^q$ for some $q$.
Then $\LX\cong\R^q$ as well.
\end{proposition}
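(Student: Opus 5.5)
The plan is to prove both directions by moving between seed columns and module maps, using the group-algebra pairing $\langle x,z\rangle = x^\dagger z$ on $\LX\times\LZ$. The facts already established are that it descends to logical classes and satisfies $(xa)^\dagger(zb)=\bar a(x^\dagger z)b$. We also know that its $1_G$-coefficient is the nondegenerate binary pairing $B$. The first step is to upgrade this to a perfect $\R$-valued pairing. Define $\Phi:\LZ\to\operatorname{Hom}_{\R}(\LX,\R)$ by $z\mapsto(x\mapsto z^\dagger x)$, so that right $\R$-linearity in $x$ holds and $\Phi$ is conjugate-linear in $z$. Since $\R=\F[G]$ is a Frobenius algebra, an element $a\in\R$ vanishes iff the $1_G$-coefficient of $ga$ vanishes for all $g$. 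Hence $\Phi$ is injective by nondegeneracy of $B$ on the finite-dimensional spaces $\LX,\LZ$. Comparing $\F$-dimensions ($\dim\LX=\dim\LZ=k$, and $\dim_{\F}\operatorname{Hom}_{\R}(M,\R)=\dim_{\F}M$ for finite modules over a Frobenius algebra, because $\R$ is injective and $\operatorname{Hom}_\R(M,\R)\cong\operatorname{Hom}_{\F}(M,\F)$) then shows $\Phi$ is bijective. So $\LZ$ is the twisted $\R$-dual of $\LX$, and symmetrically.

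($\Rightarrow$) Given a frame, the map $\R^q\to\LZ$, $e_j\mapsto[z_j]$, is a right module map. It is injective: if $\sum_j z_jb_j$ is a stabilizer, pair with $x_i$ to get $\sum_j x_i^\dagger z_j b_j=b_i=0$. Pairing with the $x_i$ works because stabilizers pair trivially. Dimension count ($qP=k$) gives surjectivity, so $\LZ\cong\R^q$, and the same argument gives $\LX\cong\R^q$. This direction is routine.

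($\Leftarrow$) Suppose $\LZ\cong\R^q$ with free generators $[z_1],\dots,[z_q]$. The coordinate functionals $\pi_i:\LZ\to\R$ (with $\pi_i(z_j)=\delta_{ij}$) are right $\R$-linear. By the symmetric version of the perfect-pairing step (the map $\LX\to\operatorname{Hom}_\R(\LZ,\R)$, $x\mapsto x^\dagger(\cdot)$, which is bijective and conjugate-linear), there exist classes $[x_i]\in\LX$ with $x_i^\dagger z=\pi_i(z)$ for all $z$. In particular $x_i^\dagger z_j=\delta_{ij}$. Choosing representatives $x_i\in\ker H_Z$, $z_j\in\ker H_X$ gives $H_ZX=0$, $H_XZ=0$ and $X^\dagger Z=I_q$, which is a frame. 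Then $\LX\cong\R^q$ follows from ($\Rightarrow$). Alternatively, $\operatorname{Hom}_\R(\R^q,\R)\cong\R^q$, with the conjugation twist harmless because $g\mapsto g^{-1}$ is an anti-automorphism making the twisted dual of a free right module free.

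The main obstacle is the duality step. We must check carefully that $\Phi$ lands in right-module homomorphisms with the correct side conventions in the noncommutative case. We also need the Frobenius facts: that the trace form "coefficient of $1_G$" detects zero via $\{ga\}$, and that $\R$ is self-injective. The self-injectivity is what gives the dimension equality for duals of non-free modules. I would handle this by identifying $\operatorname{Hom}_\R(M,\R)\to\operatorname{Hom}_\F(M,\F)$, $\phi\mapsto \epsilon\circ\phi$ with $\epsilon$ the $1_G$-coefficient, as a bijection. This avoids any appeal to injectivity beyond this explicit isomorphism.
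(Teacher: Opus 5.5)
Your argument is correct. It is a more structural version of the paper's proof rather than a different idea.

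The paper works entirely in binary linear algebra. The translates $[z_jh]$ of an $\R$-basis form a binary basis of $\LZ$, and one takes the $B$-dual basis in $\LX$. Invariance of $B$ under simultaneous translation, together with uniqueness of dual bases, forces the dual of $[z_jh]$ to be the $h$-translate of the dual of $[z_j]$. Reading off the coefficients of $x_i^\dagger z_j$ then gives $X^\dagger Z=I_q$.

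You instead identify $\LX$ with the conjugate-twisted $\R$-dual of $\LZ$, using the bijection $\phi\mapsto\epsilon\circ\phi$ from $\operatorname{Hom}_{\R}(M,\R)$ to $\operatorname{Hom}_{\F}(M,\F)$, where $\epsilon$ is the coefficient of $1_G$. You then pull back the coordinate functionals $\pi_i$. In the free case the two constructions coincide: $\epsilon\circ\pi_i$ is exactly the binary functional dual to $[z_i]$ among all translates, so your $x_i$ is the paper's dual seed. The paper's translation-invariance step is what right $\R$-linearity of the inverse map $\psi\mapsto\bigl(m\mapsto\sum_g\psi(mg^{-1})g\bigr)$ encodes.

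What your route buys is a perfect $\R$-valued pairing valid for arbitrary, not necessarily free, logical modules. It also gives an explicit treatment of the forward direction and of the left/right conventions. The paper's route avoids Frobenius language and needs only finite-dimensional dual bases.

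Three small corrections to your write-up:
\begin{itemize}
\item Injectivity of $\Phi$ needs no Frobenius input. Since $\epsilon(z^\dagger x)=\epsilon(x^\dagger z)=B(x,z)$, nondegeneracy of $B$ gives it directly.
\item The Frobenius fact is actually used in the bijection $\phi\mapsto\epsilon\circ\phi$. There it enters through the identities $\epsilon(\phi(m)g)=\epsilon(\phi(mg))$ for all $g$, which determine $\phi(m)$. This is the right-multiplication form of the fact, not the $ga$ form you stated.
\item Self-injectivity of $\R$ plays no role once you have that explicit bijection.
\end{itemize}
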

\begin{proof}
The translates of an $\R$-basis are a binary regular basis, and conversely a binary regular basis specifies an $\R$-basis.
Choose such a basis of $\LZ$ and its binary dual in $\LX$.
Invariance of the pairing under simultaneous translation makes the dual basis transform by the same regular permutation action.
Choose physical representatives for the seeds of both bases.
They satisfy Eq.~\eqref{eq:frame}, which records their pairings with all translates.
\end{proof}

\subsection{Weight of a complete basis}
For a stabilizer code $\mathcal C$ with $k>0$, define its \emph{frame width} by
\begin{align}
f(\mathcal C)=\min_{\mathcal B}\max_{O\in\mathcal B}\wt(O), \label{eq:frame-parameter}
\end{align}
where $\mathcal B$ ranges over physical representatives of complete canonical logical Pauli bases and $\wt(O)$ counts their occupied qubits.
The $2k$ operators form $k$ anticommuting pairs and commute between distinct pairs; mixed Pauli representatives are allowed, and no group action is specified.

For a CSS code, define the \emph{spanning threshold} in the $Z$ sector by
\begin{align}
f_{\mathrm{span}}^Z=\min\bigl\{t:\operatorname{span}_{\F}\{[z]:\wt(z)\le t\}=\LZ\bigr\}.
\label{eq:spanning-threshold}
\end{align}
Here $z$ ranges over physical binary representatives in $\ker\widehat H_X$; quaternary constructions are evaluated after binary expansion.
Define $f_{\mathrm{span}}^X$ by exchanging the sectors and set $f_{\mathrm{span}}=\max\{f_{\mathrm{span}}^X,f_{\mathrm{span}}^Z\}$.
This is the weight cutoff needed to span both sectors, without requiring canonical pairing or regular orbits.

For specified seeds $(X,Z)$ of a full regular CSS frame, the \emph{reported frame width} is $f_{\mathrm{rep}}=\max_i\{\wt(x_i),\wt(z_i)\}$. This quantity depends on the chosen frame.
Define $f_{\rm reg}(\mathcal C,G)$ by minimizing this width over all full regular canonical CSS frames for the specified group action, and set it to $\infty$ if none exists.
For a CSS code with an explicit full regular frame,
\begin{align}
d\le f_{\mathrm{span}}\le f\le f_{\rm reg}\le f_{\mathrm{rep}}. \label{eq:frame-width-hierarchy}
\end{align}
Projecting a complete Pauli basis onto either CSS sector gives a spanning set without increasing any operator's weight, proving $f_{\mathrm{span}}\le f$ even for mixed representatives.
The remaining inequalities follow from the definitions and invariance of weight under translation.
We suppress the code and group arguments when they are fixed.
Tables report $f_{\mathrm{rep}}$, with a left superscript $\star$ precisely when equality to the optimized parameter $f$ has been proved.

The spanning threshold also has a dual characterization.
For a nonzero $\F$-linear functional $\lambda\in\LZ^*$, minimizing over the same physical representatives gives
\begin{align}
d_\lambda&=\min\{\wt(z):\lambda([z])=1\},\notag\\
f_{\mathrm{span}}^Z&=\max_{0\ne\lambda\in\LZ^*}d_\lambda.
\label{eq:spanning-dual}
\end{align}
The second equality holds because the classes represented at weight at most $t$ fail to span precisely when some nonzero functional evaluates to zero on every such class.
Thus excluding representatives with $\lambda([z])=1$ through a given weight certifies a lower bound on $f_{\mathrm{span}}^Z$, and hence on $f$.

\subsection{The odd-order obstruction}
For an odd-order Abelian group, the translation action separates into independent symmetry components.
A regular orbit contributes once to each component, with its field degree accounting for the binary dimension.
A complete collection of regular orbits must therefore have the same multiplicity in every component, as illustrated in Fig.~\ref{fig:spectrum}.
Algebraically, $\R$ is a product of finite fields $K_\nu$.
Projection to each factor preserves kernels and images, so
\begin{align}
h_\nu:=\dim_{K_\nu}(\LZ)_\nu&=L-\rank_{K_\nu}(H_X)_\nu \notag\\
&\quad-\rank_{K_\nu}(H_Z^\dagger)_\nu. \label{eq:component-ranks}
\end{align}
A module over this product is free of rank $q$ exactly when every component has dimension $q$.
Thus a full regular canonical frame exists exactly when $h_\nu$ is independent of $\nu$.
For cyclic $G$, the factors come from the distinct irreducible factors of $t^P-1$.
If both checks are fully populated with monomials, evaluation at $t=1$ makes each an all-one matrix of rank one, giving $h_1=L-2$.
A component in which the syndrome and stabilizer maps both have rank $J>1$ instead has $h_\nu=L-2J$, obstructing freeness.
An invariant vector is not itself an obstruction: the sum of a regular orbit is invariant.
The obstruction is the mismatch of multiplicities.

Full regularity imposes a stronger restriction on fully populated binary monomial checks, including over non-Abelian groups.
When $k>0$, it forces each check space to be generated by a single block row, allowing two qubits to cancel their syndromes and form a logical operator.
The proof uses Maschke's averaging argument, which makes taking invariants exact at odd group order~\cite{Iyengar2004}.
\begin{samepage}
\begin{proposition}[Odd order]
\label{prop:odd}
Let $G$ have odd order, let $r_X,r_Z\ge1$, and let every entry of both CSS checks be a group element.
If $k>0$ and the code has a full regular canonical frame, then $d=2$.
\end{proposition}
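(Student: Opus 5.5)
The plan is to combine the trivial-component count with a rank comparison over $\F$, which avoids decomposing $\R$ into components and so also covers non-Abelian $G$. Then I will exhibit an explicit weight-two logical operator.

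\emph{Step 1 (frame multiplicity).} By Proposition~\ref{prop:module}, a full regular canonical frame gives $\LZ\cong\R^q$ with $k=qP$. Taking $G$-invariants, $\R^G$ is spanned by the orbit sum $\sum_g g$, so $\dim_{\F}(\LZ)^G=q$. Because $|G|$ is odd, Maschke's averaging operator $\frac1{|G|}\sum_g g$ is defined over $\F$. Hence taking invariants is exact, and we may compute $(\LZ)^G$ as the homology of the invariant subcomplex. On invariants, a group-algebra entry acts through its augmentation. Every entry is a group element with $\aug=1$, so both checks become all-one matrices of rank one. This gives
\begin{align}
q=L-2,\qquad k=(L-2)P .
\end{align}
Consequently $k>0$ forces $L\ge 3$.

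\emph{Step 2 (each check space is one block row).} From Eq.~\eqref{eq:binary-dimension}, $\rank_{\F}\widehat H_X+\rank_{\F}\widehat H_Z=2P$. Let $\rho=(g_1,\dots,g_L)$ be any row of $H_X$. The map $c\mapsto \rho^\dagger c$ is injective, because its first block $g_1^{-1}c$ already determines $c$. Hence the translates of a single row span a $P$-dimensional binary space, so $\rank_{\F}\widehat H_X\ge P$, and likewise $\rank_{\F}\widehat H_Z\ge P$. Both must therefore equal $P$. It follows that the $X$-check space is exactly the span of the translates of its first block row $\rho$, and the $Z$-check space is exactly the span of the translates of a row $\sigma=(h_1,\dots,h_L)$ of $H_Z$.

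\emph{Step 3 (weight-two logical operator).} Define $z$ by placing $g_1^{-1}$ in block $1$, $g_2^{-1}$ in block $2$, and zero elsewhere. Then $\rho z=1+1=0$. Every other row of $H_X$ is an $\R$-combination of $\rho$, so $H_Xz=0$, and $\wt(z)=2$.

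Every element of $\im H_Z^\dagger$ has the form $\sigma^\dagger c=(h_1^{-1}c,\dots,h_L^{-1}c)$ with $c\in\R$. All $L\ge3$ of its blocks have weight $\wt(c)$. Such an element has support in exactly two blocks only if $c=0$, so $z\notin\im H_Z^\dagger$ and $d_Z\le 2$. The symmetric argument gives $d_X\le2$.

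Conversely, a weight-one vector $g$ placed in a single block violates the check $\rho$, because $\rho$'s entry in that block is a unit. So $d_X,d_Z\ge2$, and hence $d=2$.

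The step I expect to need the most care is the exactness claim in Step 1 in the non-Abelian, right-module setting. It requires checking that homology commutes with the invariant functor, which holds because the Reynolds projector is a chain map splitting invariants off as a direct summand. It also requires checking that invariants of a free module $\R^q$ have dimension exactly $q$. Both follow directly from Maschke's averaging at odd order.
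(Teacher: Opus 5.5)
Your proposal is correct and follows essentially the same route as the paper's proof: Maschke averaging gives $q=\dim_{\F}(\LZ)^G=L-2$, and the resulting count $\rank_{\F}\widehat H_X+\rank_{\F}\widehat H_Z=2P$ forces each check space to be spanned by the translates of a single fully populated block row. The weight-two vector $(g_1^{-1},g_2^{-1},0,\ldots,0)$ is then a nontrivial logical because every nonzero stabilizer occupies all $L\ge3$ blocks; your added justifications (injectivity of $c\mapsto\rho^\dagger c$, and every row of $H_X$ being a left $\R$-multiple of $\rho$ so that $H_Xz=0$) spell out steps the paper leaves implicit.
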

\end{samepage}
\begin{proof}
Averaging over $G$ is a projection onto invariant vectors because $P=1$ in $\F$.
Taking invariants is exact: an invariant image has an invariant preimage obtained by averaging any preimage.
The invariant part of a regular module is one-dimensional, spanned by $\sum_g g$.
On invariant vectors each check acts by its augmentation, an all-one matrix of rank one.
Thus $\dim_{\F}(\LZ)^G=L-2$.
If $\LZ\cong\R^q$, this dimension equals $q$, so $k=P(L-2)$.

The two expanded check ranks consequently sum to $2P$.
Each is at least $P$, since one fully populated monomial block row already has rank $P$.
Both ranks equal $P$, so each check space is generated by one such block row.
Write the generating $X$ block row as $(h_1,\ldots,h_L)$.
For distinct $\ell,\ell'$, set $z_\ell=h_\ell^{-1}$, $z_{\ell'}=h_{\ell'}^{-1}$, and all other entries of $z$ to zero.
This weight-two vector has syndrome $1_G+1_G=0$ against that block row, and hence $H_Xz=0$.
In this one-block-row presentation, every nonzero $Z$ stabilizer occupies all $L$ blocks because each monomial block is invertible.
Since $k>0$ implies $L>2$, the weight-two operator is not a stabilizer.
A weight-one operator has nonzero syndrome because every block entry is invertible.
Hence $d=2$.
\end{proof}
Here fully populated refers to the CSS checks $H_X,H_Z$ themselves; product-code checks contain structural zeros even when their constituents are fully populated.

Over $\mathbb F_4$, augmentation sends $c g$ to $c$, so a fully populated monomial check need not collapse to rank one.
The binary proof therefore does not extend directly.
Our $C_7$ quaternary LP examples show that odd group order alone does not preclude full regularity at distance greater than two (Appendix~\ref{sec:lp-short-quaternary}).
Their product checks also contain structural zeros, so these examples do not isolate the effect of changing the coefficient field.
At odd order, however, augmentation alone does not certify full regularity; the other symmetry components must also be controlled, for example through an explicitly invertible pivot.

\section{Full regularity over \texorpdfstring{$2$}{2}-groups}
\label{sec:modular}
For a finite $2$-group, a sufficient test for full regularity uses only the augmented checks, which forget the group labels and retain their coefficient sums.

The reason differs from the odd-order case: in characteristic two, an even-order action need not split into separate symmetry components.
For the cyclic group of order $P=2^a$,
\begin{align}
\F[C_P]\cong\F[\epsilon]/(\epsilon^P),\qquad \epsilon=t+1. \label{eq:localcyclic}
\end{align}
The invariant vectors now lie inside translation chains linked by $t+1$, rather than in a separate field component.
If $T$ denotes translation by $t$, then $\epsilon$ acts as the nilpotent operator $T+I$.
A full regular module consists of chains $z_i,z_i\epsilon,\ldots,z_i\epsilon^{P-1}$ of length $P$; the last vector in each chain is invariant under translation.
The vectors in each chain are obtained by an invertible change of basis from the $P$ translates of one seed.
The augmentation criterion below ensures that these chains have the full length $P$, replacing the separate rank tests of the odd-order case.

For a finite $2$-group, a square matrix over $\R$ is invertible whenever its augmentation is invertible.
The underlying algebra is local: the augmentation ideal $J=\ker\aug$ is nilpotent and $\R/J\cong\F$.
Every finitely generated projective module over this algebra is free~\cite{Iyengar2004}.
Applied to the CSS check maps, these standard algebraic facts give a coding criterion for full regularity, valid also for non-Abelian groups.

\begin{samepage}
\begin{theorem}[Full-rank augmentation]
\label{thm:free}
Let $G$ be a finite $2$-group and $H_XH_Z^\dagger=0$.
If $B_X=\aug(H_X)$ and $B_Z=\aug(H_Z)$ have full row ranks $r_X,r_Z$, then the expanded checks have ranks $r_XP,r_ZP$, and
\begin{align}
\LX\cong\LZ\cong\R^q,\qquad k=qP, \label{eq:free}
\end{align}
where $q=L-r_X-r_Z$.
For $q>0$, the code has a full regular canonical frame.
\end{theorem}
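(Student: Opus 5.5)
The plan is to use the locality of $\R$: show that $H_X$ and $H_Z^\dagger$ split off free direct summands of $\R^L$, so that the logical quotient is projective and therefore free. Proposition~\ref{prop:module} then supplies the frame.

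\textbf{Step 1: ranks of the expanded checks.} Since $B_X$ has full row rank $r_X$, some $r_X\times r_X$ minor of $B_X$ is invertible over $\F$. Select the corresponding columns of $H_X$ to form a square submatrix $S_X$ over $\R$. The augmentation of $S_X$ is this invertible minor, so $S_X$ is invertible over $\R$ by the local-algebra fact quoted before the theorem. Hence $H_X\colon\R^L\to\R^{r_X}$ is surjective and splits, with a right inverse obtained from $S_X^{-1}$ placed in the selected coordinates. Over $\F$, surjectivity means $\rank_{\F}\widehat H_X=r_XP$. The same argument applied to $H_Z$ gives $\rank_{\F}\widehat H_Z=r_ZP$, and then Eq.~\eqref{eq:binary-dimension} yields $k=(L-r_X-r_Z)P=qP$. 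Note that $\aug(\bar a)=\aug(a)$, so $\aug(H_Z^\dagger)=B_Z^{\mathsf T}$. Consequently $H_Z^\dagger\colon\R^{r_Z}\to\R^L$ has a left inverse and is a split injection.

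\textbf{Step 2: freeness of $\LZ$.} The surjection $H_X$ splits, so $\ker H_X$ is a direct summand of $\R^L$, hence projective, hence free. Its rank is $L-r_X$, which follows by comparing $\F$-dimensions: $\dim_{\F}\ker H_X=(L-r_X)P$. Next, $\im H_Z^\dagger\cong\R^{r_Z}$ is contained in $\ker H_X$ because $H_XH_Z^\dagger=0$. I also need it to be a direct summand of $\ker H_X$, not merely of $\R^L$. This is immediate: if $\pi$ is a left inverse of $H_Z^\dagger$ defined on all of $\R^L$, its restriction to $\ker H_X$ is still a left inverse. Therefore $\LZ=\ker H_X/\im H_Z^\dagger$ is a direct summand of the free module $\ker H_X$, hence projective, hence free. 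Its $\F$-dimension is $qP$, so $\LZ\cong\R^q$. The symmetric argument, exchanging $X$ and $Z$ and using $H_ZH_X^\dagger=(H_XH_Z^\dagger)^\dagger=0$, gives $\LX\cong\R^q$. Alternatively this follows from the second clause of Proposition~\ref{prop:module}.

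\textbf{Step 3 and the main obstacle.} For $q>0$, Proposition~\ref{prop:module} converts $\LZ\cong\R^q$ into a full regular canonical frame. The step that needs the most care is the splitting argument in Step~2, particularly over non-Abelian $G$. There, left and right module structures must be tracked consistently: matrices act on the left and $\R$ acts on the right, so the maps are right-module homomorphisms and the left and right inverses built from $S_X^{-1}$ are right-linear. I would also check that the quoted fact "projective implies free" is applied to finitely generated right modules, which holds because these are submodules of $\R^L$ over a finite algebra. The rank count then follows by dimension.
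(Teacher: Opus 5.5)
Your proposal is correct and follows the paper's proof essentially step for step. The invertible augmented pivot block makes $H_X$ split surjective and $H_Z^\dagger$ split injective, and the left inverse restricts to $\ker H_X$, so $\LZ$ is a direct summand and hence projective and free over the local algebra. The rank follows from a dimension count, and Proposition~\ref{prop:module} supplies the frame. The only difference is minor: the paper proves invertibility of the selected block explicitly by writing $U=U_0(I+N)$ with $N$ nilpotent, whereas you cite this fact from the text preceding the theorem.
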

\end{samepage}
\begin{proof}
Choose $r_X$ columns whose augmentation is an invertible binary matrix $U_0$.
The corresponding square block is $U=U_0(I+N)$ with $N\in M_{r_X}(J)$.
If $J^s=0$, then $N^s=0$ and $(I+N)^{-1}=\sum_{j=0}^{s-1}N^j$.
Thus $H_X$ is split surjective, and $\ker H_X$ is projective.
Likewise $H_Z$ is split surjective, so $H_Z^\dagger$ is split injective.
Its image lies in $\ker H_X$, and its left inverse restricts to that kernel, splitting the quotient defining $\LZ$.
The two splittings give $\R^L\cong\R^{r_X}\oplus\R^{r_Z}\oplus\LZ$ noncanonically, so $\LZ$ is projective and hence free over the local algebra.
Subtracting ranks gives $q=L-r_X-r_Z$.
Exchanging $X$ and $Z$ proves the other assertion, and Proposition~\ref{prop:module} supplies the frame.
\end{proof}

For monomial-or-zero checks, augmentation gives the binary support, so a full-rank support guarantees full regularity for every commuting lift over a finite $2$-group.
This lets us fix the support and coefficients to guarantee full regularity before optimizing the commuting group labels, without changing $k$.
The explicit low-weight LP and PP constructions below use Abelian groups.
The condition is sufficient; a presentation with redundant check rows can also have a free logical module.

Even cyclic order alone does not make augmentation sufficient: for periods other than powers of two, the same splitting argument requires full rank in every residue-field component, of which augmentation tests only $t=1$ (Appendix~\ref{sec:other-periods}).

\subsection{Selecting seeds for full regular orbits}
\label{sec:seed-selection}
The augmented checks also determine which candidate logical operators can serve as seeds of a full regular basis.
For a binary operator, augmentation records the parity of its support within each group block.
Under Theorem~\ref{thm:free}, candidate seeds generate a complete regular basis precisely when these block-parity vectors form a basis modulo the augmented stabilizers.
Algebraically, reducing the split sequences modulo $J$ gives
\begin{align}
\LZ/\LZ J\cong\ker B_X/\im B_Z^{\mathsf T}. \label{eq:residue}
\end{align}
Classes $[z_1],\ldots,[z_q]\in\LZ$ form an $\R$-basis exactly when their residues form a binary basis of this quotient.
To see sufficiency, express them in any $\R$-basis of the free module $\LZ$.
Their coordinate matrix has invertible augmentation, so the finite inverse in the proof of Theorem~\ref{thm:free} makes it invertible over $\R$.
Necessity follows by reducing a basis modulo $J$.

In particular, a logical operator can be included among the seeds of a full regular basis exactly when its residue is nonzero.
A minimum-weight logical operator can fail this test: a nontrivial logical class need not generate an orbit that extends to a full regular basis.
The seed search should therefore select low-weight representatives whose residues form a basis of the $q$-dimensional binary quotient in Eq.~\eqref{eq:residue}.
Their $qP$ translates then span the entire $Z$-type logical space.
The remaining task is to find low-weight $X$-type partners with canonical pairing.

\subsection{Controlling the weight of canonical partners}
\label{sec:light-partners}
Complete regular bases in the two sectors need not yet be canonically paired.
Their pairing matrix specifies how to recombine the seeds to obtain canonical partners.
Under Theorem~\ref{thm:free}, collect $q$ kernel representatives from each sector into $X_0$ and $Z_0$, and require their residues to form bases of the two augmented logical quotients.
The matrix $M=X_0^\dagger Z_0$ records a commutation table: its coefficients give the pairings of each seed with all translates of the opposite-sector seeds.
Its augmentation is the nondegenerate binary pairing between the residue bases, so $M$ is invertible.
Consequently
\begin{align}
X=X_0,\qquad Z=Z_0M^{-1},\qquad X^\dagger Z=I_q. \label{eq:pairing-normalization}
\end{align}
Logical spectroscopy uses the same Gram-matrix normalization for complete conjugate bases between reciprocal primary components, including at even lift sizes~\cite[Appendix~D, Lemma~12]{Lee2026Spectroscopy}.
For regular orbits, the corresponding condition is identity correlation in the full primary rings~\cite[Appendix~D.3]{LeeEtAl2026PP}.
Augmentation certifies invertibility; Eq.~\eqref{eq:pairing-normalization} uses the full inverse over $\R$, including its nilpotent terms.
Because this basis change acts on seeds over $\R$, their translates remain organized into regular orbits.
The identity pairing proves independence modulo stabilizers, and $qP=k$ proves completeness.
The same normalization applies whenever $M$ is invertible, including over non-Abelian group algebras.

Applying $M^{-1}$ can combine many translated seeds and increase their physical support.
Permuting the seeds and translating each by a group element preserves these weights.
The LP and cofactor PP constructions below control both the weight of syndrome corrections and the effect of normalizing the pairing.
They place the corrections for $X$- and $Z$-type seeds on disjoint sets of coordinates, so the corrections do not contribute to the pairing.
Only the physical coordinate blocks on which both seeds have support contribute, as illustrated in Fig.~\ref{fig:unified-frames}.
For LP codes, the tensor arrangement gives identity pairing.
For PP codes, disjoint pivot supports give a scalar pairing; when its inverse is a group translation, normalization preserves the seed weights.
\input{main_v7_parts/figures/unified_frame_construction}

\section{Lifted-product codes}
\label{sec:lp}
For lifted-product (LP) codes, the two design choices can be made at the constituent level: ranks determine full regularity, and sparse corrections in the constituent kernels control the logical-seed weights.
Hypergraph products and their lifted versions realize this structure~\cite{TillichZemor2014,PanteleevKalachev2022}, but a systematic hypergraph-product basis need not retain canonical pairings compatible with the lift~\cite{Scruby2025}.
For Abelian $G$, take $A\in\R^{m_a\times n_a}$ and $B\in\R^{m_b\times n_b}$ and form
\begin{align}
H_X&=\bigl(A\otimes I_{m_b}\quad I_{m_a}\otimes B\bigr), \\
H_Z&=\bigl(I_{n_a}\otimes B^\dagger\quad A^\dagger\otimes I_{n_b}\bigr). \label{eq:lp}
\end{align}
Commutativity of $\R$ gives $H_XH_Z^\dagger=0$, with $L=n_am_b+m_an_b$ blocks of $P$ qubits.

\subsection{Full regularity of lifted products}
For an Abelian $2$-group, full row rank of $\aug(A)$ and full column rank of $\aug(B)$ give full row rank of the augmented LP checks.
Theorem~\ref{thm:free} then guarantees $\LZ\cong\LX\cong\R^q$, where $q=(n_a-m_a)(m_b-n_b)$.
At odd order, full row rank of $A$ and full column rank of $B$ in every field component give the corresponding guarantee~\cite[Sec.~IV.3]{Lee2026Spectroscopy}; Appendix~\ref{sec:other-periods} gives the general component criterion.
For $\mathsf M^{45}_{5,9}$, the constituent rank defect instead leaves $720$ logical pairs in regular orbits and $64$ additional translation-invariant pairs (Fig.~\ref{fig:spectrum})~\cite[Sec.~V]{Lee2026Spectroscopy}.

\subsection{Explicit low-weight frames}
\label{sec:lp-explicit-frames}
We use the self-adjoint family $\mathrm{LP}(A,A^\dagger)$ over an Abelian $2$-group, taking $A\in\R^{r\times s}$ with $r<s$ and full-row-rank augmentation.
Its parameters are
\begin{align}
n=P(s^2+r^2),\qquad k=P(s-r)^2. \label{eq:lp-free-parameters}
\end{align}
Thus $r,s$ fix the rate $(s-r)^2/(s^2+r^2)$, while $P$ sets the length.
The remaining design task is to keep the constituent-kernel corrections sparse while searching for distance.

Choose $r$ columns whose augmented submatrix is invertible and order them first, $A=(D\ E)$.
We call the columns of $D$ pivot coordinates and those of $E$ free coordinates.
The argument in the proof of Theorem~\ref{thm:free} makes $D$ invertible.
Setting one free entry to $1$ and the others to zero then determines the pivot entries of a kernel vector as the corresponding column of $F=D^{-1}E$.
Arrange the first $s^2$ qubit blocks as an $s\times s$ array and set the remaining $r^2$ blocks to zero.
Then $H_X$ applies $A$ to each column, while $H_Z$ applies $A$ to each row, viewed as a column vector; both directions use $A$ because $B=A^\dagger$.
Place the two seeds' corrections on pivot--free and free--pivot coordinates, leaving only their common free--free coordinate shared [Fig.~\ref{fig:unified-frames}(a)].
Set $U=\binom{F}{I_{s-r}}$ and $V=\binom{0}{I_{s-r}}$, so $U$ includes the correction and $V$ selects the free coordinate.
With columns $u_i,v_i$, take
\begin{align}
z_{ij}=(u_i\otimes v_j,0),\qquad x_{ij}=(v_i\otimes u_j,0). \label{eq:lp-regular-seeds}
\end{align}
Here the augmentation criterion selects an invertible pivot, and the weights of $D^{-1}E$ guide the choice among such pivots.
Each occupied column or row is in $\ker A$ because $AU=0$, so these seeds have zero syndrome.
The identities $V^\dagger U=U^\dagger V=I$ give canonical pairing for all seed labels (Fig.~\ref{fig:unified-frames}).
The $(s-r)^2$ seed pairs therefore give a full regular frame with $f_{\mathrm{rep}}=\max_i\wt(u_i)$.

For binary cyclic lifts, a monomial pivot determinant makes $D^{-1}$ a translate of the cofactor matrix; cancellation in $D^{-1}E$ can reduce the correction weights further.
The binary example in Table~\ref{tab:code-comparison} attains $f=d=12$ this way.
Three-row binary monomial-or-zero constituents in this full-rank construction obey $d\le14$, motivating polynomial entries, more rows, or quaternary coefficients (Appendix~\ref{sec:lp-design-details}).

Quaternary coefficients~\cite{Kasai_2012, panteleev2022asymptoticallygoodquantumlocally, OkadaKasai2026LWCB} are a shared design choice for LP and PP codes: they allow fully populated pivots with full-rank augmentation and additional determinant cancellations.
Write $\mathbb F_4=\{0,1,\omega,\omega^2\}$ with $1+\omega+\omega^2=0$; three determinant terms with coefficients $1$, $\omega$, and $\omega^2$ and the same group label therefore cancel.
For LP codes, such cancellations can leave a monomial pivot determinant, allowing the weights of $D^{-1}E$ to be controlled through cofactors; Appendix~\ref{sec:lp-quaternary-design} gives an explicit pivot and its determinant.
The augmentation and residue arguments apply over $\mathbb F_4[G]$, with ranks and residue bases now computed over $\mathbb F_4$.

To obtain a qubit CSS code, we expand each field coordinate into two binary coordinates in the self-dual trace basis $(\omega,\omega^2)$.
This preserves commutation and canonical pairing; Appendix~\ref{sec:binary-expansion} gives the expansion and physical-weight formulas.
Multiplication by $\omega$ or $\omega^2$ can turn a one-bit coefficient into a two-bit coefficient, so quaternary symbol weight does not determine physical weight.
All weights below count the resulting physical-qubit support.
This expansion gives
\begin{align}
n=2P(s^2+r^2),\qquad k=2P(s-r)^2, \label{eq:lp-f4-parameters}
\end{align}
with two binary seed pairs for each pair in Eq.~\eqref{eq:lp-regular-seeds}.

\subsection{Distance, check weight, and frame width}
The LP block of Table~\ref{tab:code-comparison} compares distance, check weight, and frame width at rate $2/17$.
At $(n,k)=(1088,128)$, quaternary coefficients increase distance beyond the binary example's $d=f=12$: at check weight $10$, both the distance-$20$ and distance-$22$ codes have reported width $22$, and the latter attains $f=f_{\mathrm{reg}}=d=22$.
Keeping most coefficients equal to $1$ limits the binary check support, since each acts as the identity on the two trace-basis coordinates; sparse corrections $D^{-1}E$ control the frame width.
Both weights must be evaluated for each choice of coefficients and group labels.

For each quaternary check, we choose two of its three nonzero binary trace combinations to minimize the larger row weight.
For fully populated monomial $3\times5$ quaternary constituents in the self-adjoint family with full-rank augmentation, $w\ge10$ in both the standard trace-basis expansion and this choice of generators (Appendix~\ref{sec:lp-check-floor}).
The bound is attained by the distance-$20$ and distance-$22$ examples.
The construction with one zero in each nonpivot column achieves $w=9$ and $d=16$, compared with $d=20$ or $22$ for the reported $w=10$ examples.
A different pattern, with two zero entries in the pivot, attains check weight $8$ at distances $14$ and $16$ for lengths $544$ and $1088$, respectively (Appendix~\ref{sec:lp-eight}).
Appendix~\ref{sec:lp-inputs} describes the shorter codes, general unit pivots, and independent-constituent variants.

Searches on smaller constituent and quotient-lift kernels reject candidates below the target distance early~\cite[App.~F]{Lee2026Spectroscopy}, but constituent distance alone does not certify quantum distance (Appendix~\ref{sec:lp-metric}).
In the full-code search, translation symmetry removes equivalent starting positions, while stabilizer-overlap constraints discard branches that cannot contain a minimum-weight logical: such a representative cannot be shortened by adding a stabilizer.
Completed exclusions through weight $D-1$ in both CSS sectors, together with a verified weight-$D$ logical witness, establish $d=D$.
Appendix~\ref{sec:distance-algorithm} gives the branching, pruning, and completeness arguments, including how a verified sector-exchange symmetry transfers exclusions between sectors.

\subsection{Spanning and canonical pairing under a fixed action}
\label{sec:lp-pairing-main}
Minimum-weight regular bases can span both logical sectors without admitting a canonical pairing of the same weight under the prescribed cyclic action.
Denote the $[[1088,128,16]]$ code with $w=9$ by $\mathcal C^{\mathrm{LP}}_{16,9}$ and the $[[1088,128,20]]$ code with $w=10$ by $\mathcal C^{\mathrm{LP}}_{20,10}$; the subscripts specify $(d,w)$.
Both have regular bases of weight $d$ in each sector separately, so $f_{\mathrm{span}}^X=f_{\mathrm{span}}^Z=d$.
For the specified $C_{16}$ action, however, their canonical regular widths satisfy $f_{\rm reg}=18$ and $21\le f_{\rm reg}\le22$, respectively (Proposition~\ref{prop:lp-regular-pairing-cost}).

The obstruction lies in choosing all canonical pairs simultaneously.
For a candidate $X$ seed and $Z$ seed, exactly one translate of the $Z$ seed must anticommute with the $X$ seed; every other translate must commute.
Two such pairs can belong to the same canonical frame only if their cross-pairings vanish at every translation.
Exhaustively enumerated lists of low-weight operators give a finite graph whose vertices are candidate orbit pairs and whose edges join compatible pairs.
A full frame needs $k/P=128/16=8$ mutually compatible pairs.
For lists complete through weights $17$ and $20$, respectively, proper colorings with five and three colors exclude such a choice: mutually adjacent vertices must all have different colors.

These exclusions concern the prescribed orbit organization.
The unrestricted width $f$ remains unresolved.
For the distance-$16$ example, any width-$17$ canonical basis could be reduced to width $16$ by multiplying its operators by stabilizers; hence the remaining possibilities are $f=16$ and $f=18$.
Thus $f_{\mathrm{span}}<f_{\rm reg}$ is proved for these examples, but $f_{\mathrm{span}}<f$ is not.
This mechanism differs from the PP example below, where low-weight logicals fail to span the logical space at all.
Appendix~\ref{sec:lp-pairing-obstruction} proves these bounds, including for regular symplectic frames with mixed Pauli seeds.

\section{Pair-partition codes}
\label{sec:binary}
In the binary PP construction, paired columns contribute identical group-algebra terms to each commutation equation.
These terms cancel in characteristic two.
We combine this freedom with full-rank supports and then use quaternary coefficients to control the weights of canonical partners.

\subsection{Pair cancellation and reflected partners}
For one row of $H_X$ and one row of $H_Z$, each column occupied by both contributes a group element to the corresponding entry of $H_XH_Z^\dagger$.
Pairing columns with equal contributions makes their sum zero in characteristic two.
Choose full-row-rank binary support matrices $B_X,B_Z$ with $B_XB_Z^{\mathsf T}=0$.
For each row pair $(i,j)$, partition the columns occupied by both rows into pairs $\{u,v\}$ and impose
\begin{align}
a_{iu}b_{ju}^{-1}=a_{iv}b_{jv}^{-1}. \label{eq:pp}
\end{align}
Here $a_{i\ell}$ and $b_{j\ell}$ are the group labels in the two checks.
For cyclic labels this condition is a linear congruence on their exponents.
The doubled support $B_X=B_Z=(C\ C)$ is automatically self-orthogonal.
If $C\in\F^{r\times M}$ has full row rank, Theorem~\ref{thm:free} fixes the parameters of every commuting lift over a $2$-group:
\begin{align}
n=2MP,\qquad k=2(M-r)P. \label{eq:parameters}
\end{align}
The full-rank supports guarantee full regularity, while their dimensions fix the rate $1-r/M$; the group labels remain available for improving distance subject to the pair-cancellation constraints.

A reflection lets the same seed design serve both CSS sectors.
For an Abelian $2$-group, take
\begin{align}
H_X&=(U\ V),\qquad H_Z=(\overline V\ \overline U), \label{eq:reverse}\\
H_XH_Z^\dagger&=UV^{\mathsf T}+VU^{\mathsf T}=0. \label{eq:reverse-commutation}
\end{align}
The qubit permutation $\varphi:(\ell,g)\mapsto(\pi(\ell),g^{-1})$, where $\pi$ exchanges the two column halves, sends one check space to the other and gives $d_X=d_Z$.
This fixed-point-free, check-exchanging reflection is the reversing symmetry used for logical pairing and symplectic halving in Ref.~\cite{LeeEtAl2026PP}.
One can therefore search for seeds in the $Z$ sector and obtain $X$ representatives of equal weight by reflection.
Choosing canonical partners still requires the pairing induced by this reflection to be normalized, as in Sec.~\ref{sec:light-partners}.
The next construction controls both seed weights and normalization by separating the two sectors' correction supports.

\subsection{Cofactor frames and canonical pairing}
\label{sec:quaternary}
Using the same quaternary coefficient field and binary trace-basis expansion as in Sec.~\ref{sec:lp-explicit-frames}, we construct fully populated PP checks with full-rank augmentation.
As in the LP construction, we place the two sectors' corrections on disjoint coordinates, leaving only the free coordinates to determine their pairing [Fig.~\ref{fig:unified-frames}(b)].
Here the free-coordinate entries carry pivot determinants, so canonical pairing requires an additional scalar normalization.
Reversing symmetry makes this scalar a determinant square; requiring that square to be a group translation makes normalization preserve binary weight.

Work over $R=\mathbb F_4[C_P]$, with $P=2^a$.
The involution reverses the group element and fixes field coefficients.
Commutation requires the coefficient-weighted contributions at each group label to sum to zero; the lift constraints below preserve these cancellations.
In each CSS sector, commuting $3\times8$ checks with full-rank augmentation give a free logical module of rank two over $R$.
Binary expansion therefore gives $n=16P$, $k=4P$, and four binary seed pairs.

We place the corrections to the $Z$-type seeds on columns $\mathcal P=(0,1,2)$ and those to the $X$-type seeds on the disjoint columns $\mathcal P'=(4,5,6)$.
The remaining columns $\mathcal J=(3,7)$ carry the shared free coordinates.
Define the pivot blocks $D_X=H_X[:,\mathcal P]$ and $D_Z=H_Z[:,\mathcal P']$, with determinants $\Delta_X=\det D_X$ and $\Delta_Z=\det D_Z$.

Multiplying a free-coordinate entry by the corresponding pivot determinant allows its syndrome to be cancelled with cofactors, without expanding the inverse pivot matrix.
The standard adjugate identity $D\adj(D)=\det(D)I$ gives the following explicit construction.
Collect the seeds into $\widetilde Z=(\widetilde z_3\ \widetilde z_7)$ and $\widetilde X=(\widetilde x_3\ \widetilde x_7)$.
With their rows displayed in the order $(\mathcal P,\mathcal P',\mathcal J)$, these matrices are
\begin{equation}
\begin{pmatrix}\widetilde Z & \widetilde X\end{pmatrix}
=
\begin{pmatrix}
\adj(D_X)H_X[:,\mathcal J] & 0_{3\times2}\\
0_{3\times2} & \adj(D_Z)H_Z[:,\mathcal J]\\
\Delta_X I_2 & \Delta_Z I_2
\end{pmatrix}.
\label{eq:cofactors}
\end{equation}
Reflection exchanges the free-column labels and preserves raw seed weights: $\widetilde x_7=\varphi(\widetilde z_3)$ and $\widetilde x_3=\varphi(\widetilde z_7)$.
These disjoint-pivot representatives and their scalar normalization also appear in Ref.~\cite{OkadaKasai2026LWCB}.

\begin{proposition}[Square pairing]
\label{prop:square}
Suppose the reversing checks commute and the augmentation of $D_X$ is invertible.
Then
\begin{align}
\widetilde x_a^\dagger\widetilde z_b=\delta_{ab}g,\qquad g=\Delta_X^2. \label{eq:square}
\end{align}
The seeds $x_a=\widetilde x_a$ and $z_a=\widetilde z_a g^{-1}$ form a full regular canonical frame over $R$.
For $g=t^m$, normalization is a cyclic shift and preserves binary weight.
\end{proposition}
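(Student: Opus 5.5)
The plan is to compute the group-algebra pairing $\widetilde X^\dagger\widetilde Z$ directly from the block form \eqref{eq:cofactors}. The reversing structure \eqref{eq:reverse} is what makes the two pivot determinants conjugate to each other. After that, the frame property follows from Proposition~\ref{prop:module} together with Theorem~\ref{thm:free}.

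\emph{Syndromes.} First I would check that the seeds lie in the correct kernels. For $\widetilde z_b$, the only nonzero coordinates are on $\mathcal P$ and $\mathcal J$. Hence
\[
H_X\widetilde z_b=D_X\adj(D_X)H_X[:,\mathcal J]e_b+\Delta_X H_X[:,\mathcal J]e_b .
\]
Since $R$ is commutative, the adjugate identity gives $D_X\adj(D_X)=\Delta_X I$. The two terms are therefore equal, and they cancel in characteristic two. The same computation with $D_Z$ and $\mathcal P'$ gives $H_Z\widetilde x_a=0$.

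\emph{Pairing.} Write $x^\dagger z=\sum_\ell \overline{x_\ell}\,z_\ell$. The $\mathcal P$ coordinates of $\widetilde x_a$ vanish, and so do the $\mathcal P'$ coordinates of $\widetilde z_b$. Only the free coordinates $\mathcal J$ contribute, giving
\[
\widetilde x_a^\dagger\widetilde z_b=\delta_{ab}\,\overline{\Delta_Z}\,\Delta_X .
\]
The key structural step is to identify $D_Z$. By \eqref{eq:reverse}, columns $4,5,6$ of $H_Z=(\overline V\ \overline U)$ are the conjugates of columns $0,1,2$ of $U$, which are exactly the columns of $D_X$. Thus $D_Z=\overline{D_X}$ entrywise, with the same row order.

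The involution is a ring automorphism of the commutative algebra $R$ that fixes $\mathbb F_4$ coefficients, so it commutes with taking determinants. This gives $\Delta_Z=\overline{\Delta_X}$, hence $\overline{\Delta_Z}=\Delta_X$ and the pairing equals $\delta_{ab}\Delta_X^2=\delta_{ab}\,g$. This column bookkeeping is where I expect the main risk of error. I would double-check it against the stated reflection identities $\widetilde x_7=\varphi(\widetilde z_3)$ and $\widetilde x_3=\varphi(\widetilde z_7)$.

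\emph{Invertibility and the frame.} Augmentation commutes with determinants, so $\aug(\Delta_X)=\det\aug(D_X)\neq0$. The argument in the proof of Theorem~\ref{thm:free} (the augmentation ideal is nilpotent) then makes $\Delta_X$, and hence $g$, a unit of $R$. Setting $z_a=\widetilde z_a g^{-1}$ and using $(xa)^\dagger(zb)=\bar a(x^\dagger z)b$ gives $X^\dagger Z=I_2$, while the syndromes are unchanged.

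For completeness I would use Theorem~\ref{thm:free}, which gives $\LZ\cong\LX\cong R^2$. The identity Gram matrix makes $[z_1],[z_2]$ free and independent over $R$, and a count of $\mathbb F_4$-dimensions ($2P$ in each sector) shows that they span. Definition~\ref{def:frame} is therefore satisfied.

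\emph{Monomial case and physical qubits.} If $g=t^m$, then $g^{-1}=t^{-m}$ and right multiplication by it is the cyclic translation of every block. This translation permutes the physical coordinates, so it preserves binary weight. Finally, the trace-basis binary expansion (Appendix~\ref{sec:binary-expansion}) preserves commutation and canonical pairing, so the frame carries over to the qubit code.
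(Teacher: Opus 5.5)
Your proposal is correct and follows the paper's proof step for step. It uses the adjugate identity for the syndromes, the disjoint pivot supports to reduce the pairing to $\delta_{ab}\overline{\Delta_Z}\Delta_X$, the reversing structure to get $D_Z=\overline{D_X}$, invertible augmentation to make $\Delta_X$ a unit, and a rank count against Theorem~\ref{thm:free} for completeness. Your column bookkeeping is right, since columns $4,5,6$ of $H_Z=(\overline V\ \overline U)$ are the conjugates of columns $0,1,2$ of $H_X$. You also make explicit a few points the paper leaves implicit, namely that $\aug$ and the involution commute with determinants and that identity pairing forces independence modulo stabilizers.
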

\begin{proof}
The adjugate identity gives $H_X\widetilde z_a=H_Z\widetilde x_a=0$.
An $X$ seed and a $Z$ seed have disjoint coordinate supports unless their labels agree; their only common coordinate is then the free column with that label.
Their pairing is therefore $\delta_{ab}\overline{\Delta_Z}\Delta_X$.
Reversing symmetry gives $D_Z=\overline{D_X}$, hence $\overline{\Delta_Z}=\Delta_X$.
Invertible augmentation makes $\Delta_X$ a unit in the local ring, so normalization gives identity pairing.
The two seed pairs match the rank of each logical module over $R$; expanding each with both trace-basis elements gives four binary regular seed pairs.
\end{proof}
When the scalar pairing instead has the form $ct^m$ with $c\ne1$, normalization can change binary weight; the reported widths include this step and any subsequent paired basis changes.

\input{main_v7_parts/figures/seed_supports}

Period doubling gives a way to vary the code while keeping normalization a fixed translation.
In a lift from period $h$ to $2h$, each exponent $e$ is replaced by $e+hb$ with $b\in\{0,1\}$.
Changing these lift bits changes the lifted minor by a multiple of $1+t^h$, whose square is zero modulo $t^{2h}-1$.
Thus a parent minor $\Delta_h=t^{m_0}$ guarantees $\Delta_{2h}^2=t^{2m_0}$ even when the lifted minor has several terms.
The remaining commutation constraints are affine equations in the lift bits (Appendix~\ref{sec:affine-pp-lifts}).
Their solutions can therefore be searched for distance while canonical normalization remains a fixed translation.
Each lifted code must still satisfy the prescribed physical check and frame weight limits.

With invertible selected pivots, the construction extends to other Abelian groups, including $C_{16}\times C_2$ (Appendix~\ref{sec:small-noncyclic-quaternary}).
For cyclic periods beyond powers of two, the condition is $\gcd(\Delta_X(t),t^P-1)=1$.
The unit pivots give both checks rank $3P$ over $\mathbb F_4$, and the identity pairing makes the two $R$-seed pairs independent; the logical dimension $2P$ over $\mathbb F_4$ then gives completeness.
At $P=48$, a pivot with $\Delta_X^2=1$ gives a $[[768,192,22]]$ code with $w=10$ and $f_{\mathrm{rep}}=29$ (Appendix~\ref{sec:pp-period48}).

The PP block of Table~\ref{tab:code-comparison} exhibits a tradeoff among distance, check weight, and reported frame width.
Among the reported codes with $n=512$, $k=128$, and $d=19$, increasing $w$ from $9$ to $10$ to $11$ lowers the reported width from $30$ to $25$ to $24$.
These short, rate-$1/4$ examples also exceed the $d\le14$ ceiling for the self-adjoint quaternary LP family of Proposition~\ref{prop:small-f4-lp-bound}, which applies at $n\le600$ and rate at least $1/5$.
The $1024$-qubit distance-$24$ examples similarly have reported frame widths $27$ and $25$ at check weights $10$ and $11$, respectively.

Our $[[1024,256,24]]$ PP code matches the distance, rate, check weight, and reported canonical-basis width of the $[[2048,512,24]]$ construction of Okada and Kasai at half the blocklength: both have rate $1/4$, $w=10$, and $f_{\mathrm{rep}}=27$~\cite{OkadaKasai2026LWCB}.
Reversing symmetry makes the cofactor pairing a determinant square; the square-zero changes available in the final lift therefore leave canonical normalization unchanged.
We exploit this freedom through affine commutation constraints to optimize the final exponent lift for distance, subject to the prescribed check and frame weight bounds.
A separate $w=11$ construction attains the same distance with the proved optimum $f=25$.
At comparable blocklength, the $w=10$ PP example also offers a higher encoding rate and a substantially narrower reported canonical basis than the $[[975,195,\le24]]$ mitten code: rate $1/4$ rather than $1/5$, and width $27$ rather than $102$, with check weight $10$ rather than $9$~\cite{BhardwajEtAl2026Processors}.

\subsection{An optimal full regular basis}
\label{sec:pp-width-separation}
We denote by $\mathcal C^{\mathrm{PP}}_{24,11}$ the period-$64$ reversing code with coefficient array $C$ of Eq.~\eqref{eq:c4} and exponent array $E^{(24)}$ of Eq.~\eqref{eq:e24}, using the weight-eleven check presentation specified there.
Its four seed pairs generate all $256$ canonical pairs, as summarized in Fig.~\ref{fig:seeds}(b).

For $\mathcal C^{\mathrm{PP}}_{24,11}$, minimum-weight logical operators cannot span the full logical space.
Exhaustive search shows that every $Z$ logical of weight at most $24$ acts on an even number of qubits within binary blocks $6$ and $15$, the two shaded rows in Fig.~\ref{fig:seeds}(b), whereas some $Z$ logicals have odd parity there.
Multiplication by stabilizers preserves this parity, and products of even-parity operators remain even, so the weight-$24$ operators cannot generate the missing logical classes.
A complete canonical basis therefore requires an operator of weight at least $25$.
Our full regular cofactor frame attains this bound, proving $d=24<f=25$.

\begin{proposition}[Strict separation of $d$ and $f$]
\label{prop:width-separation}
The code $\mathcal C^{\mathrm{PP}}_{24,11}$, defined by the coefficient array~\eqref{eq:c4} and exponents $E^{(24)}$ in Eq.~\eqref{eq:e24}, satisfies
\begin{align}
d=24,\qquad f_{\mathrm{span}}=f=f_{\rm reg}=25. \label{eq:pp-distance-width-separation}
\end{align}
\end{proposition}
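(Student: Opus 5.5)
The plan is to sandwich all the quantities using the hierarchy \eqref{eq:frame-width-hierarchy}, $d\le f_{\mathrm{span}}\le f\le f_{\rm reg}\le f_{\mathrm{rep}}$. It then suffices to prove three facts: $d=24$, $f_{\mathrm{rep}}\le 25$ for an explicit full regular frame, and $f_{\mathrm{span}}\ge 25$.

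\emph{Upper bound via an explicit frame.}
\begin{enumerate}
\item Verify that the augmentations of the $3\times8$ quaternary checks built from $C$ in Eq.~\eqref{eq:c4} have full row rank over $\mathbb F_4$. By Theorem~\ref{thm:free} (applied over $\mathbb F_4[C_{64}]$), the logical modules are then free of rank two, giving $k=256$ after binary expansion.
\item Check that the augmentation of $D_X=H_X[:,\mathcal P]$ is invertible.
\item Compute $\Delta_X$ for the exponents $E^{(24)}$ and confirm that $\Delta_X^2=t^m$ is a monomial. If the lift was obtained by period doubling from a parent with monomial minor, this follows from the square-zero argument.
\item Apply Proposition~\ref{prop:square}. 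It makes the cofactor seeds of Eq.~\eqref{eq:cofactors}, normalized by a cyclic shift, a full regular canonical frame.
\item Expand the seeds in the trace basis and count physical weights using the formulas of Appendix~\ref{sec:binary-expansion}. This should give the four binary seed pairs of Fig.~\ref{fig:seeds}(b), each of weight at most $25$, so $f_{\mathrm{rep}}\le 25$. Reflection by $\varphi$ halves the bookkeeping, since $\widetilde x_7=\varphi(\widetilde z_3)$ and $\widetilde x_3=\varphi(\widetilde z_7)$.
\end{enumerate}

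\emph{Distance.}
\begin{enumerate}
\item Exhibit a verified weight-$24$ $Z$ logical, i.e.\ a vector in $\ker \widehat H_X$ that is not in the stabilizer span.
\item Run the exhaustive search of Appendix~\ref{sec:distance-algorithm}, with translation symmetry and stabilizer-overlap pruning, to exclude all nontrivial logicals of weight $\le 23$ in the $Z$ sector.
\item Transfer this exclusion to the $X$ sector by the reversing symmetry $\varphi$, which gives $d_X=d_Z$. This yields $d=24$.
\end{enumerate}

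\emph{Lower bound $f_{\mathrm{span}}\ge 25$.} This uses the dual characterization \eqref{eq:spanning-dual}.
\begin{enumerate}
\item Define the functional $\lambda(z)=$ parity of the support of $z$ on binary blocks $6$ and $15$. Show that it is linear and vanishes on $\im \widehat H_Z^{\mathsf T}$: every $Z$ stabilizer generator has even overlap with these two blocks, which is a finite check on the generator rows. Hence $\lambda$ descends to $\LZ$.
\item Show $\lambda\neq 0$ on $\LZ$ by evaluating it on one explicit frame seed with odd parity there.
\item Show that every $Z$ logical of weight $\le 24$ has $\lambda=0$. Then $d_\lambda\ge 25$, so $f_{\mathrm{span}}^Z\ge 25$.
\end{enumerate}
Combining the three parts gives $25\le f_{\mathrm{span}}\le f\le f_{\rm reg}\le f_{\mathrm{rep}}\le 25$.

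\emph{Main obstacle.} The hard step is step 3 of the lower bound. It requires enumerating \emph{all} $Z$ logical representatives of weight exactly $24$ (weight $\le 23$ is already excluded), not merely finding one of minimum weight.

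My first approach would be to rerun the distance search at target weight $24$, adding $\lambda=1$ as a constraint. Stabilizer pruning remains valid for this restricted search only after adjusting the pruning argument. A minimum-weight representative of a class with $\lambda=1$ cannot be shortened by a stabilizer, because adding a stabilizer preserves $\lambda$; so the "cannot shorten" criterion still applies within the coset.

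A subtlety is that translation does not preserve $\lambda$ in general, so the translation-symmetry reduction of starting positions must be checked. Blocks $6$ and $15$ are whole blocks and are invariant under the $C_{64}$ action, so translation does preserve $\lambda$, and the reduction can be kept.
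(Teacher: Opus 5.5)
Your proposal is correct and follows the paper's proof. The cofactor frame of Proposition~\ref{prop:square} gives $f_{\rm reg}\le25$; the parity functional on blocks $6$ and $15$, with a completed $\lambda=1$ exclusion through weight $24$, gives $f_{\mathrm{span}}\ge25$ via Eq.~\eqref{eq:spanning-dual}; and a weight-$24$ witness plus a full-code exclusion transferred by reflection gives $d=24$. The one justification the paper adds for the constrained search is that a lightest $\lambda=1$ word is support-irreducible, since a proper kernel subsupport would split it into two lighter disjoint kernel words, one with $\lambda=1$; this licenses stopping branches at zero syndrome, as in the ordinary distance search.
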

\begin{proof}
Let $\lambda(z)$ count the support of $z$ in the union of binary blocks $6$ and $15$, modulo two, as specified in Appendix~\ref{sec:pp-width-detector}.
It vanishes on $Z$ stabilizers and equals one on a cofactor seed, so it induces a nonzero functional on $\LZ$.
The completed exclusion of zero-syndrome $Z$ operators with $\lambda=1$ through weight $24$ gives $d_\lambda\ge25$.
Equation~\eqref{eq:spanning-dual} therefore implies $f_{\mathrm{span}}\ge25$: the classes represented below weight $25$ lie in the proper hyperplane $\ker\lambda$.
For mixed Pauli operators, apply $\lambda$ to the $Z$ component, whose weight cannot exceed the full Pauli weight; projecting a complete basis must span $\LZ$.
The cofactor frame gives $f_{\rm reg}\le25$, so the hierarchy in Eq.~\eqref{eq:frame-width-hierarchy} yields $f_{\mathrm{span}}=f=f_{\rm reg}=25$, including for unrestricted mixed-Pauli canonical bases.
Exhaustive searches on the full code and the weight-$24$ logical operator in Eq.~\eqref{eq:flagship-distance-witness} establish $d=24$.
Appendix~\ref{sec:proof-data-cost} and Ref.~\cite{Lee2026RegularFramesRepository} document the completed exclusions.
\end{proof}

\section{Discussion}
Full regularity, low-weight spanning, and low-weight canonical pairing impose distinct constraints on code design.
The augmentation and residue tests give sufficient criteria for full regularity over finite $2$-groups.
The two LP examples in Sec.~\ref{sec:lp-pairing-main} have minimum-weight regular bases in both sectors but no complete canonical pairing of that weight under the prescribed action; their unrestricted widths remain unresolved.
For $\mathcal C^{\mathrm{PP}}_{24,11}$, minimum-weight logicals fail to span the full logical space, and the regular frame attains the unrestricted optimum $d=24<f=25$.

Controlling syndrome corrections and canonical normalization brings reported frame width close to distance.
Logical spectroscopy constructs complete canonical bases for general Abelian lifted products, but symmetry projectors and pairing transformations can spread their physical support~\cite{Lee2026Spectroscopy}.
Here sparse $D^{-1}E$ corrections give LP seeds with canonical pairing directly; for selected PP constructions, a translation normalizes the pairing without changing weight.
Quaternary coefficients allow these sparse constructions to retain fully populated pivots.

The gains occur at finite blocklength: the $1088$-qubit LP code attains $d=f=22$, and the $512$-qubit PP examples exhibit the tradeoff between check and frame weights.
Matrix dimensions permit higher encoding rates, with distance and frame width to be optimized at each choice.
At a few hundred to about a thousand qubits, rate $1/4$ gives a useful balance for our PP examples.
At small blocklengths, our examples lie below the reported $kd^2/n$ frontier of Ref.~\cite[Sec.~VI.1]{LeeEtAl2026PP}, but supply explicit full regular canonical bases with controlled weights.
Within the $1024$-qubit PP family, attaining $d=24$, $w=10$, and $f=25$ together remains open.

Reversing symmetry also permits non-CSS constructions by folding paired physical coordinates~\cite[Sec.~IV]{LeeEtAl2026PP}.
For $\mathcal C^{\mathrm{PP}}_{24,11}$, this folding preserves the weights of an inherited complete canonical basis.
The inherited translations need not act as physical qubit permutations.

The existence criteria also cover non-Abelian $2$-groups.
Two-sided non-Abelian LP codes use commuting left and right regular representations~\cite{LinPryadko2024,BhardwajEtAl2026Processors,Hong2026ZSZ}.
Unit constituent pivots give canonical partners organized by opposite actions.
Extending these constructions to full regular canonical frames requires a common physical group action preserving both CSS sectors, which remains open for these examples.
The PP construction additionally uses commutative cofactor formulas and a sector-exchange symmetry.

Frame width specifies the support needed to supply a complete canonical logical basis directly.
The obstructions concern such bases, rather than all possible generating sets for logical operations.
Turning the translated supports into fault-tolerant logical measurements requires distance-preserving interfaces and an assessment of their ancilla and circuit costs; surgery and extractor architectures provide concrete settings for this task~\cite{CrossEtAl2024Surgery,WilliamsonYoder2026Gauging,SwaroopEtAl2026Adapters,CowtanEtAl2026Parallel,ZhengJiangXu2025HighRate,HeEtAl2025Extractors}.
Our finite-size examples establish that substantial distance and encoding rate can coexist with complete canonical bases whose weights remain close to the distance.
These results make the complete canonical basis a design objective alongside distance and check weight.

\noindent\begin{minipage}[b]{\columnwidth}
\vspace*{\baselineskip}
\section*{Data availability}
The companion repository~\cite{Lee2026RegularFramesRepository} supplies the full construction catalogue, physical frames and witnesses, independent checkers, search sources, and completion records.
Appendix~\ref{sec:proof-data-cost} distinguishes finite certificates from exhaustive execution evidence; large traversal trees are omitted.
\end{minipage}

\noindent\begin{minipage}[b]{\columnwidth}
\vspace*{\baselineskip}
\section*{Acknowledgements}

J.Y.L. thanks Hengyun Zhou for discussions and feedback, as well as collaboration on Ref.~\cite{LeeEtAl2026PP}. 
The computations were performed through the Illinois Campus Cluster Program at the University of Illinois, Urbana-Champaign. 
J.Y.L. is supported by the faculty startup grant at the University of Illinois, Urbana-Champaign and the IBM-Illinois Discovery Accelerator Institute.
The author acknowledges the use of AI tools from OpenAI and Anthropic for performing simulations, making figures, and editing the text.
The author has reviewed all content and takes full responsibility for the results.
\end{minipage}

\bibliography{main_v7_parts/references}


\appendix
\raggedbottom
\renewcommand{\thesubsection}{\thesection.\arabic{subsection}}
\makeatletter
\renewcommand{\p@subsection}{}
\renewcommand{\p@subsubsection}[1]{\thesubsection(#1)}
\renewcommand{\p@paragraph}[1]{\thesubsection(\thesubsubsection)(#1)}
\makeatother
\input{main_v7_parts/appendices/distance_verification}
\input{main_v7_parts/appendices/lp_theory}
\input{main_v7_parts/appendices/frame_width_obstructions}

\input{main_v7_parts/appendices/construction_catalogue}

\end{document}

%% file: main_v7_parts/figures/module_spectrum.tex
\begin{figure}[!t]
\centering
\includegraphics[width=\columnwidth]{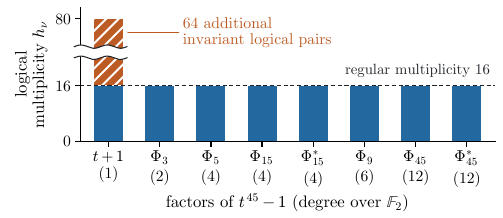}
\vspace{-15pt} 
\caption{ {\bf Obstruction to full regularity}. LP code $\mathsf M^{45}_{5,9}$ from logical spectroscopy~\cite[Sec.~V]{Lee2026Spectroscopy}, with a fully populated monomial $5\times9$ constituent at $P=45$, of rank five outside $t=1$.
The blue part contains $16$ regular copies, accounting for $720$ logical pairs.
The $64$ additional invariant pairs make the component multiplicities unequal, obstructing full regularity.
Labels below each factor give its degree over $\F$.}
\vspace{-10pt}
\label{fig:spectrum}
\end{figure}

%% file: main_v7_parts/figures/unified_frame_construction.tex
\begin{figure*}[t]
\centering
\includegraphics[width=\textwidth]{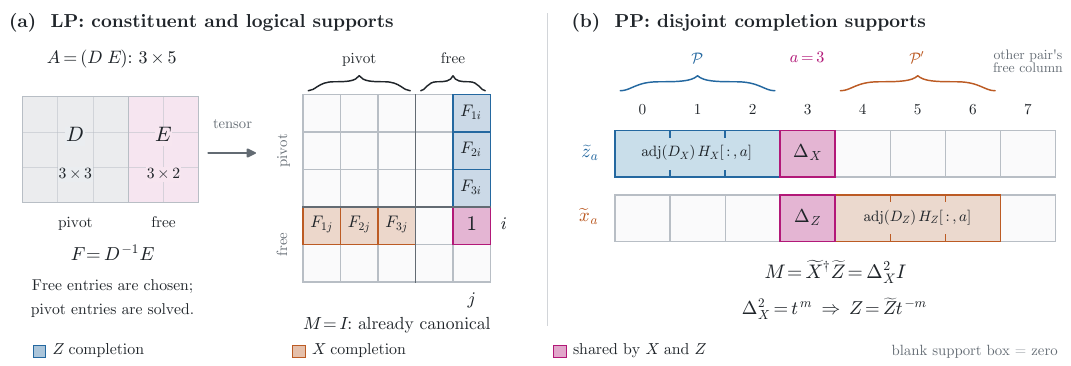}
\caption{ {\bf Canonical pair construction.}
(a) For the $3\times5$ constituent $A=(D\ E)$, only the pivot block $D$ is square; $F=D^{-1}E$ determines the kernel vectors' pivot entries.
The $5\times5$ array represents qubit blocks; the omitted $3\times3$ array is zero, so $H_X$ tests each column and $H_Z$ each row against $A$.
The illustrated seeds share only the unit entry $(i,j)$.
For arbitrary seed labels, $x_{ij}^\dagger z_{kl}=(v_i^\dagger u_k)(u_j^\dagger v_l)=\delta_{ik}\delta_{jl}$, proving $M=I$ for the full frame.
(b) The PP cofactor vectors $\adj(D_X)H_X[:,a]$ and $\adj(D_Z)H_Z[:,a]$ occupy disjoint pivot sets $\mathcal P$ and $\mathcal P'$.
For the illustrated $a=3$ pair, column $7$ is zero; it is the other pair's free column.
Only the common free coordinates contribute to $M=\widetilde X^\dagger\widetilde Z=\Delta_X^2I$ under reversing symmetry.
If $\Delta_X^2=t^m$, normalization $Z=\widetilde Zt^{-m}$ preserves weight, even when $\Delta_X$ has several terms.
Colors distinguish the $Z$ completion, $X$ completion, and shared components.
Each support box contains $|G|$ physical qubits over $\mathbb F_2$, or $2|G|$ after trace expansion over $\mathbb F_4$.}
\label{fig:unified-frames}
\end{figure*}

%% file: main_v7_parts/figures/seed_supports.tex
\begin{figure*}[!t]
\centering
\includegraphics[width=\textwidth]{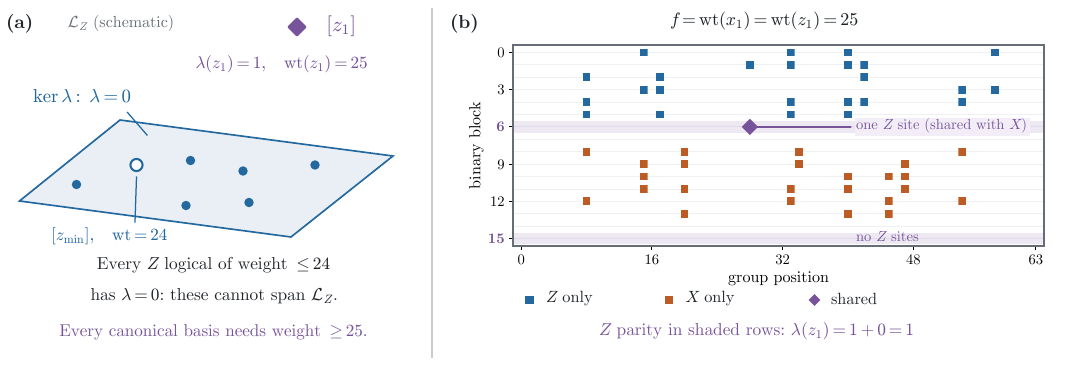}
\caption{ {\bf Separation of $d$ and $f$ for $\mathcal C^{\mathrm{PP}}_{24,11}$ (Proposition~\ref{prop:width-separation}).}
Here $x_1,z_1$ are the binary trace-basis expansions of $\omega x_3,\omega z_3$, where $x_3,z_3$ are the normalized quaternary seeds of Proposition~\ref{prop:square}.
(a) The nonzero logical functional $\lambda$ vanishes on stabilizers and, by exhaustive exclusion, on every $Z$ logical representative of weight at most $24$.
These classes, including the weight-$24$ witness $z_{\min}$, lie in the proper hyperplane $\ker\lambda$ and cannot span $\mathcal L_Z$.
The weight-$25$ seed $z_1$ lies outside it.
Applying $\lambda$ to the $Z$ component also excludes a width-$24$ mixed-Pauli basis.
Logical-space positions are schematic and do not encode weight or physical geometry.
(b) Actual supports of $(x_1,z_1)$: both operators have weight $25$ and share one qubit (purple diamond).
Rows index binary blocks $2\ell+b$, where $b=0,1$ selects the trace-basis component; columns index group positions.
The shaded rows, $6$ and $15$, define $\lambda$; they contain one and zero $Z$ sites, respectively, so $\lambda(z_1)=1$.
All four seed pairs have weight $25$; translating them through $C_{64}$ gives $256$ canonical pairs and proves $f\le25$.}
\label{fig:seeds}
\end{figure*}

%% file: main_v7_parts/appendices/distance_verification.tex
\section{Exhaustive verification and proof data}
\label{sec:distance-algorithm}
To certify a distance $D$, we exhibit a weight-$D$ logical operator and prove that no lighter logical exists.
We check the witness directly and obtain the lower bound by a completed exhaustive search (Fig.~\ref{fig:distance-certificate}).
All searches use binary physical coordinates, including the binary expansion of quaternary checks.

For the $Z$ sector, let $H=\widehat H_X$ and let $S=\im\widehat H_Z^{\mathsf T}$ be the $Z$ stabilizer space.
A vector $z$ commutes with the checks when its syndrome $Hz$ vanishes; it is a nontrivial logical operator when it is not in $S$.
Thus
\begin{align}
d_Z=\min\{\wt(z):Hz=0,\ z\notin S\}.
\end{align}
Exchange $X$ and $Z$ to obtain $d_X$, and take $d=\min(d_X,d_Z)$.
A verified weight-preserving reflection exchanges the sectors in our reversing examples, proving $d_X=d_Z$ and allowing one sector's exclusion to serve for both.
\input{main_v7_parts/figures/distance_search_flow}

\subsection{Building supports and pruning the search}
The search adds occupied coordinates until every check overlaps the support an even number of times~\cite{DumerKovalevPryadko2016}.
It can be restricted to \emph{support-irreducible} kernel words: nonzero words whose supports contain no proper nonempty kernel support.
A shortest logical has this property; otherwise a kernel word on a smaller support would either be a lighter logical or a stabilizer whose removal shortens the original.
Thus these words suffice to exclude $d_Z\le W$.

At each node, $T$ is the nonempty selected support, $s=H\mathbf 1_T$ lists the checks with odd overlap, and the remaining available coordinates are recorded.
If $s\ne0$, choose an unsatisfied check and order its available neighbors as $q_1,\ldots,q_m$.
Every zero-syndrome completion must use at least one neighbor.
Branch $i$ includes $q_i$ and forbids $q_1,\ldots,q_{i-1}$, so every completion belongs to exactly one branch; the ordering affects speed only.
At zero syndrome, test for a logical witness and do not extend that branch, even if the current word is a stabilizer: further additions would contain a proper nonempty kernel support and are unnecessary for finding a shortest logical.

The logical test uses exact binary elimination against the stabilizer row space.
An equivalent test pairs the word with a \emph{complete} opposite-sector logical basis $x_1,\ldots,x_k$:
\begin{align}
\lambda_L(z)&=(x_1^{\mathsf T}z,\ldots,x_k^{\mathsf T}z), \\
z\in\ker H:&\qquad z\in S\ \Longleftrightarrow\ \lambda_L(z)=0. \label{eq:verification-logical-syndrome}
\end{align}
An incomplete basis could miss a logical direction and mistake a logical operator for a stabilizer.

Pruning discards branches that cannot contain a required word.
Let $b=W-|T|$ be the remaining weight budget and $w_c$ the largest column weight of $H$.
Each added coordinate changes at most $w_c$ syndrome bits, so $\wt(s)>bw_c$ rules out completion.
An unsatisfied check with no available neighbor also rules it out.
Any collection of $p$ unsatisfied checks with pairwise disjoint available neighborhoods requires at least $p$ additional coordinates.
Thus $p>b$ permits pruning; the collection need not be maximal.

Stabilizers give a further test.
For every stabilizer generator $g\in S$, a shortest nontrivial logical $z$ must satisfy
\begin{align}
|\operatorname{supp}(z)\cap\operatorname{supp}(g)|\le \frac{\wt(g)}{2}, \label{eq:verification-overlap}
\end{align}
otherwise adding $g$ would reduce its weight without changing its logical class.
As the search adds coordinates, this overlap can only increase.
A partial support that already violates the inequality therefore cannot lead to a shortest logical.
These pruning rules retain a minimum-weight logical whenever $d_Z\le W$; they need not retain every low-weight representative.
For quaternary rows, using both binary components and their sum preserves the check and stabilizer spaces while providing more checks for branching and overlap pruning.

\subsection{Translation symmetry and logical-parity constraints}
Translation symmetry reduces the number of starting coordinates.
For a cyclic lift, write coordinates as $(b,j)$ with block $b$ and position $j\in\mathbb Z/P\mathbb Z$.
Translation preserves weight, the kernel, and the stabilizer space.
Move an occupied coordinate in the first occupied block to position zero.
It suffices to start at $(b,0)$ for every block, forbidding earlier blocks.
This covers every orbit, possibly more than once.

The frame-width certificate asks a different question: how light can a word be if it has a specified nonzero logical parity?
Let the linear functional $\lambda$ vanish on $S$ but be nonzero on the logical quotient, and define
\begin{align}
d_\lambda=\min\{\wt(z):Hz=0,\ \lambda(z)=1\}.
\end{align}
A shortest word with $\lambda=1$ is also irreducible.
A proper nonempty kernel subset would split it into two lighter disjoint kernel words, exactly one of which has parity one.
Adding a stabilizer preserves this parity, so the overlap bound~\eqref{eq:verification-overlap} also holds for a minimum $\lambda$-odd word.
The same branching and pruning rules apply, but a zero-syndrome word qualifies only if $\lambda=1$.

For the functional in Eq.~\eqref{eq:detector}, $\lambda(z)$ counts the occupied coordinates in the union of blocks $6$ and $15$, modulo two.
It is therefore invariant under cyclic translation; write $M=\{6,15\}$ for the selected blocks.
A word with $\lambda=1$ must occupy a block in $M$.
Translate a coordinate in its first occupied block \emph{belonging to $M$} to position zero; only roots $(6,0)$ and $(15,0)$ are needed.
At each root, forbid earlier blocks in $M$, but keep earlier blocks outside $M$ available.
Forbidding every earlier block would miss valid words.
These two roots cover every orbit with $\lambda=1$.

\subsection{Candidate selection and completed bounds}
\label{sec:search-strategy}
A lower bound requires every root and branch to finish without a qualifying word.
Parallel workers must cover every subtree in the specified partition; unfinished runs give no exclusion for unvisited branches.
Witness searches alone give only upper bounds.

For LP candidates, we first vary constituent exponents and coefficients under the pivot and check-weight constraints and screen the cofactor and systematic-seed weights.
During screening, we optimize the constituent-kernel distance, using $sP$ binary coordinates or $2sP$ after quaternary expansion.
A search through $D-1$ rejects a candidate as soon as it finds a lighter kernel word.
Completed exclusions and an attaining kernel word determine this kernel distance exactly.
Final certification must still use the full physical code, because constituent and quantum distances can differ (Appendix~\ref{sec:lp-metric}).
The full-code logical test includes every logical direction, including the $16$ invariant directions of the odd-period example in Appendix~\ref{sec:lp-odd-inputs}.
These certificates complement randomized searches~\cite{PryadkoShabashovKozin2022}, algebraic analyses~\cite{RaveendranDeclercqVasic2025}, and general exact solvers~\cite{WebsterJacobHiggott2026Distance}.

\subsection{What the proof data establish}
\label{sec:proof-data-cost}
The repository~\cite{Lee2026RegularFramesRepository} distinguishes explicit witnesses, finite certificates, and exhaustive-search records.
For Proposition~\ref{prop:lp-regular-pairing-cost}, it supplies the $45$ and $75$ normalized physical orbit masks, logical coordinates, pairing polynomials, graphs, and colorings.
An independent checker rebuilds the graphs and all relative-shift support unions, proving the obstruction \emph{if the pools are complete}.
The enumeration sources and completed root records establish that premise; fresh runs reproduce the pools and node counts.
The $\mathcal C^{\mathrm{LP}}_{16,9}$ enumeration disables stabilizer-overlap pruning to retain all required representatives; Appendix~\ref{sec:lp-pairing-obstruction} justifies completeness through its cutoff.

For $\mathcal C^{\mathrm{PP}}_{24,11}$, the two-root partition covers the completed $\lambda=1$ exclusion through weight $24$.
A partition checker verifies that each subtree is assigned exactly once; separate algebraic checks verify the functional and its weight-$25$ witness.
The distance-$24$ claim uses a separate full-code exclusion and the witness of Eq.~\eqref{eq:flagship-distance-witness}.
The partition check and algebraic witness checks do not verify the exhaustive traversal; the exclusions rely on completed executions of the supplied engine.
Inputs, source hashes, commands, partitions, and completion records permit reproduction, while large traversal trees are omitted.
These are execution records, not formal proof traces independent of the search implementation.

The full-code exclusion through weight $21$ for the width-$22$ $[[1088,128,22]]$ LP code visited $7.82\times10^8$ nodes and completed in $258$\,s wall time with four workers.
Fresh single-process enumerations for $\mathcal C^{\mathrm{LP}}_{16,9}$ and $\mathcal C^{\mathrm{LP}}_{20,10}$ reported elapsed times of $36.4$\,s and $7.4$\,min, respectively, on an Apple M3 Pro.
For the PP $\lambda=1$ exclusion through weight $24$, the runtimes of the completed single-threaded search tasks sum to about $26.7$ hours.
Exact node counts, retained build settings, and completion records are supplied in Ref.~\cite{Lee2026RegularFramesRepository}; processor models and compiler versions were not retained for the distance-$22$ LP run or the PP run.

%% file: main_v7_parts/figures/distance_search_flow.tex
\begin{figure}[!ht]
\centering
\begin{tikzpicture}[
  box/.style={draw=black!55,rounded corners=2pt,align=center,
              font=\footnotesize,inner sep=4pt},
  arr/.style={-{Stealth[length=1.7mm]},line width=.65pt}
]
\node[box,text width=3.25cm,minimum height=1.5cm,fill=blue!4]
  (upper) at (-1.9,0)
  {Weight-$D$ witness\\zero syndrome\\nonzero logical class\\$d\le D$};
\node[box,text width=3.25cm,minimum height=1.5cm,fill=blue!4]
  (lower) at (1.9,0)
  {Complete exclusion\\through weight $D-1$\\in both CSS sectors\\$d\ge D$};
\coordinate (join) at (0,-1.02);
\node[box,text width=3.4cm,fill=blue!10] (exact) at (0,-1.55)
  {Exact distance: $d=D$};
\draw (upper.south) |- (join);
\draw (lower.south) |- (join);
\draw[arr] (join) -- (exact.north);
\end{tikzpicture}
\caption{The two parts of an exact-distance certificate. The exclusion must finish at every required search root; a verified symmetry can transfer it between CSS sectors. An unfinished search does not establish the lower bound.}
\label{fig:distance-certificate}
\end{figure}

%% file: main_v7_parts/appendices/lp_theory.tex
\section{Algebraic criteria and physical-weight bounds}
\label{sec:lp-theory}
This appendix collects the algebraic criteria and physical-weight bounds used in the constructions.
Constituent ranks determine full regularity; pivot supports and field coefficients bound check and seed weights and limit distance.
The binary expansion conventions apply to both LP and PP codes.

\subsection{Constituent ranks and lift periods}
\label{sec:other-periods}
\label{sec:lp-odd-inputs}
\label{sec:odd-lp-comparison}
For an Abelian group $G$ of odd order, write $\R=\prod_\nu K_\nu$ and consider the LP checks of Eq.~\eqref{eq:lp}.
Set $a_\nu=\rank_{K_\nu}A_\nu$ and $b_\nu=\rank_{K_\nu}B_\nu$.
Invertible changes of basis put each constituent into rank normal form over $K_\nu$ and split off components with zero logical quotient.
This decomposition determines the logical dimensions; the changes of basis need not preserve physical weight.
The remaining kernel--cokernel tensor products give the component multiplicity
\begin{align}
h_\nu&=(n_a-a_\nu)(m_b-b_\nu) \notag\\
&\quad+(m_a-a_\nu)(n_b-b_\nu). \label{eq:lp-components}
\end{align}
Thus $h_\nu$ is constant exactly when a full regular canonical frame exists; logical spectroscopy supplies a complete paired binary basis even when the multiplicities differ~\cite[Sec.~IV.2]{Lee2026Spectroscopy}.

For the self-adjoint binary family $\mathrm{LP}(A,A^\dagger)$, choose $B=A^\dagger$ and $G=C_P$ with odd $P>1$.
Let $A\in\R^{r\times s}$, $s>r\ge2$, contain one monomial in every entry and have rank $r$ at every nontrivial component.
Its augmentation is the all-ones matrix of rank one.
The rank of $(A^\dagger)_\nu$ equals the rank of $A$ at the inversion-related component $\nu^\vee$, so both ranks are $r$ in every nontrivial component and one in the augmentation component.
Equation~\eqref{eq:lp-components} then gives $h_1=(s-1)^2+(r-1)^2$ and $h_\nu=(s-r)^2$ elsewhere, yielding
\begin{align}
\LZ&\cong\R^{(s-r)^2}\oplus\F^{\delta},\qquad k=P(s-r)^2+\delta, \\
\delta&=2(r-1)(s-1)>0. \label{eq:lp-defect}
\end{align}
The free summand supplies regular orbits for $P(s-r)^2$ logical pairs; the remaining $\delta$ pairs are invariant under translation~\cite[Sec.~IV~A]{ZhengEtAl2026CanonicalLP}.

For the $\mathsf M^{45}_{5,9}$ example in Fig.~\ref{fig:spectrum}, $r=5$, $s=9$, and $P=45$ give
$\LZ\cong\mathbb F_2[C_{45}]^{16}\oplus\mathbb F_2^{64}$ and $k=784$~\cite[Sec.~V]{Lee2026Spectroscopy}.

For the published fully populated $3\times5$ constituent at $P=33$~\cite[Eq.~(S74)]{ZhouEtAl2026AlgebraicSurgery}, these ranks give
$\LZ\cong\mathbb F_2[C_{33}]^4\oplus\mathbb F_2^{16}$.
The resulting $[[1122,148,20]]$ code has $132$ logical pairs in regular orbits and $16$ additional invariant pairs, preventing a full regular frame for this action. Our exhaustive certificate establishes the exact distance $d=20$.
The repository provides the constituent, logical witness, and completed search record~\cite{Lee2026RegularFramesRepository}.

A sufficient full-regularity criterion also applies to cyclic groups whose orders are not powers of two.
For $P=2^a m$ with $a\ge1$ and $m$ odd, factor $t^m-1=\prod_\nu p_\nu$ into irreducible polynomials over $\F$.
Since $t^P-1=(t^m-1)^{2^a}$, the group algebra decomposes as $\R\cong\prod_\nu\F[t]/(p_\nu^{2^a})$.
Full row rank of both checks modulo every $p_\nu$ gives $\LX\cong\LZ\cong\R^{L-r_X-r_Z}$: the involution permutes the local factors, and the splitting argument of Theorem~\ref{thm:free} gives the same free rank on each.
Augmentation tests only $p_\nu=t+1$.
For $P=6$, the element $1+t+t^2$ evaluates to $1$ at $t=1$ but divides $t^6-1$ and is not invertible.
When $P$ is a power of two, $t+1$ is the only irreducible factor, so the augmentation test suffices.
The same argument applies over $\mathbb F_4$, with factorization over that field.

\subsection{Sparse pivots and the binary distance ceiling}
\label{sec:lp-design-details}
Consider a three-row binary monomial-or-zero constituent over an Abelian $2$-group, with full augmentation rank and $s>3$.
Choose three columns whose augmented submatrix is invertible and adjoin any nonpivot column.
Form a four-component vector whose $j$th entry is the determinant of the $3\times3$ matrix obtained by deleting column $j$~\cite{SmarandacheVontobel2012}.
The resulting vector $u$ lies in the kernel and is nonzero, since its nonpivot entry is the unit pivot determinant.
Its tensor product with any column of $V$ gives a nontrivial logical operator: writing $u=Uc$, its canonical pairings recover the nonzero coefficient vector $c$.
The binary support matrix of this $3\times4$ submatrix has rank three, so its zero-pattern matrix, obtained by adding the all-ones matrix, has rank at least two.
Consequently, this submatrix has two zero positions in distinct rows and columns.
Among the $4!$ potential permutation terms across all four minors, each zero excludes $3!$ terms, and $2!$ terms involve both zeros.
Inclusion--exclusion therefore leaves at most $4!-2\cdot3!+2!=14$ terms, proving $\wt(u)\le14$ and $d\le14$.

For cyclic lifts with a monomial pivot determinant, $D^{-1}=(\det D)^{-1}\adj(D)$ translates the cofactor supports.
For a binary $3\times3$ pivot whose only zeros are at $(0,0)$ and $(1,1)$ and whose other entries are monomials $t^{e_{ij}}$, impose
\begin{align}
e_{01}+e_{12}+e_{20}=e_{02}+e_{10}+e_{21}\pmod P. \label{eq:lp-pivot-cancel}
\end{align}
Two determinant terms cancel, leaving one monomial.
For a fully populated monomial column of $E$, the three entries of $D^{-1}E$ contain at most $4$, $4$, and $3$ terms, giving a seed-weight bound $1+4+4+3=12$.
These support counts are upper bounds: terms with the same group label can cancel and reduce the physical weight.

\subsection{Quaternary expansion and physical weights}
\label{sec:binary-expansion}
Use the self-dual trace basis $(\omega,\omega^2)$ of $\mathbb F_4=\F(\omega)$, with $\omega^2+\omega+1=0$.
Multiplication by field elements is represented by
\begin{align}
\rho(1)=I_2,\quad \rho(\omega)=\begin{pmatrix}0&1\\1&1\end{pmatrix},\quad \rho(\omega^2)=\begin{pmatrix}1&1\\1&0\end{pmatrix}. \label{eq:rho}
\end{align}
These matrices are symmetric and respect multiplication.
Moreover, $\Tr(\beta_i\beta_j)=\delta_{ij}$ for $(\beta_0,\beta_1)=(\omega,\omega^2)$, so the binary dot product equals the trace of the field pairing.
Binary expansion therefore preserves commutation, and each quaternary canonical seed pair gives two binary canonical pairs by multiplying both partners by $\beta_0$ or $\beta_1$ before taking coordinates.
All weights count physical binary coordinates, not nonzero field symbols.

For a quaternary vector, first combine terms at equal group labels in each coordinate block.
Let $N_1,N_\omega,N_{\omega^2}$ count its remaining nonzero coefficients and set $M=N_1+N_\omega+N_{\omega^2}$.
Its three nonzero scalar multiples have binary weights
\begin{align}
M+N_1,\qquad M+N_{\omega^2},\qquad M+N_\omega,
\label{eq:f4-physical-weights}
\end{align}
for multiplication by $1$, $\omega$, and $\omega^2$, respectively.
For a systematic LP kernel column, the two selected binary seeds consequently have maximum weight $M+\max(N_\omega,N_{\omega^2})$.
The lightest scalar multiple is a nontrivial logical witness of weight $M+\min(N_1,N_\omega,N_{\omega^2})$.

The same counts control check weights.
The two binary rows of a quaternary check and their sum have weights $M+N_{\omega^2}$, $M+N_\omega$, and $M+N_1$.
Any two span the same binary check space, so choosing the two lightest minimizes their maximum weight to
\begin{align}
M+\operatorname{median}(N_1,N_\omega,N_{\omega^2}).
\label{eq:f4-check-balancing}
\end{align}
This pairwise row choice preserves the code and all physical logical operators; column weights must be recomputed for the chosen check basis.
It does not optimize over arbitrary stabilizer row operations.

\subsection{Quaternary pivots and check-weight constraints}
\label{sec:lp-quaternary-design}
\label{sec:lp-check-floor}
Over $\mathbb F_4[C_P]$, a fully populated pivot can have a monomial determinant through coefficient cancellations.
For example, for any integer $m$ set
\begin{align}
D=\begin{pmatrix}
1&1&1\\1&t^m&\omega t^{2m}\\1&\omega^2t^{2m}&t^m
\end{pmatrix}.
\end{align}
The three even-permutation terms have common exponent $2m$ and coefficients $1,\omega,\omega^2$, so they cancel.
Two transposition terms coincide at exponent $m$ and cancel, leaving $\det D=t^{4m}$.
Since $t^{4m}$ is a unit, $D$ is invertible for every cyclic period $P$.
For monomial nonpivot columns, each entry of $\adj(D)E$ contains at most six terms.
After normalization and combination of terms with equal group labels, a systematic kernel column has $M\le19$ nonzero coefficient positions, including its free entry.
The lightest of the three nonzero scalar multiples has binary weight at most their average, $4M/3$, giving the distance ceiling
\begin{align}
d\le M+\min(N_1,N_\omega,N_{\omega^2})\le\left\lfloor\frac{4M}{3}\right\rfloor\le25.
\end{align}
The width $f_{\mathrm{rep}}$ takes the maximum over all selected binary seed weights and can exceed $25$.

For a fully populated constituent, an LP check indexed by $(i,b)$ combines row $i$ and column $b$ of $A$ in disjoint tensor blocks.
Because the involution fixes field coefficients, Eq.~\eqref{eq:f4-check-balancing} applies to this multiset.

\begin{proposition}
For a fully populated monomial $3\times s$ constituent with $s\ge4$ and augmentation rank at least two, the maximum binary check weight is at least $s+5$, both for the raw trace-basis expansion and after the pairwise balancing above.
The bound is attained by full-rank constituents with monomial pivot determinant.
\end{proposition}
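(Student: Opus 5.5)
The plan is to reduce the statement to a combinatorial fact about the coefficient matrix of $A$, using the physical-weight formulas of Appendix~\ref{sec:binary-expansion}.

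\textbf{Step 1: the multiset of a check.} In $\mathrm{LP}(A,A^\dagger)$ with $A\in\mathbb F_4[C_P]^{3\times s}$, I would index an $X$ check by $(i,b)$ with $i\in\{1,2,3\}$ and $b\in\{1,\ldots,s\}$. By Eq.~\eqref{eq:lp}, its first tensor block contains row $i$ of $A$, which has $s$ monomial entries in distinct qubit blocks. Its second block contains row $b$ of $A^\dagger$, which has $3$ monomial entries. Since the involution fixes field coefficients, the coefficients of these entries are those of column $b$ of $A$.

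All these entries sit in distinct coordinate blocks, so nothing combines. Hence $M=s+3$, and the counts $N_1,N_\omega,N_{\omega^2}$ are the class counts of the coefficient multiset $\mathcal M_{ib}$. Here $\mathcal M_{ib}$ is row $i$ of the coefficient matrix $C=\aug(A)\in(\mathbb F_4^\times)^{3\times s}$ together with column $b$ of $C$. Note that $C_{ib}$ is counted twice.

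The same description holds for $Z$ checks by the symmetry $B=A^\dagger$.

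\textbf{Step 2: reduce to a median bound.} The raw trace-basis expansion gives the two row weights $M+N_{\omega^2}$ and $M+N_\omega$, whose maximum is at least $M+\operatorname{median}(N_1,N_\omega,N_{\omega^2})$. The balanced choice, Eq.~\eqref{eq:f4-check-balancing}, gives exactly $M+\operatorname{median}$. So both assertions follow once some $(i,b)$ has median count at least $2$, that is, once at least two coefficient classes occur at least twice in $\mathcal M_{ib}$.

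\textbf{Step 3: the combinatorial lemma.} I would argue by contradiction. Suppose every $\mathcal M_{ib}$ has two classes of count at most $1$. Its size is $s+3\ge7$, so it has a dominant class $c_{ib}$, and every other class has count at most $1$.

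The entry $C_{ib}$ is counted twice. If $C_{ib}\ne c_{ib}$, its class already has count $2$, and $c_{ib}$ has count at least $2$, so the median is at least $2$, a contradiction. Hence $C_{ib}=c_{ib}$.

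Next, since at most two elements of $\mathcal M_{ib}$ lie outside $c_{ib}$ and $s\ge4$, the dominant class of $\mathcal M_{ib}$ is the majority value of row $i$. This value is independent of $b$; call it $c_i$. Then every entry of row $i$ equals $c_i$, because $C_{ib}=c_{ib}=c_i$ for each $b$.

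The same argument with $\mathcal M_{ib}$ ranging over $i$ shows that column $b$ is constant. Since every row and every column of $C$ is constant, $C$ is a constant matrix of rank one. This contradicts the hypothesis that the augmentation rank is at least two.

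The main care point is the "dominant class is independent of $b$" step. I would handle it by noting that the majority value of a row of length $s\ge4$ with at most two exceptions is uniquely determined.

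\textbf{Step 4: attainment.} I would take the pivot $D$ of the displayed example, whose determinant is the monomial $t^{4m}$, and choose the coefficients of the nonpivot monomial columns so that the augmentation has full rank. I would then tabulate $(N_1,N_\omega,N_{\omega^2})$ for every $\mathcal M_{ib}$ and check that the maximum median is exactly $2$ and that the raw maximum is also $s+5$. For $s=5$ this gives $w=10$, and I would cross-check it against the distance-$20$ and distance-$22$ constituents in the catalogue.
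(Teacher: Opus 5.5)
Your argument is correct in substance and rests on the same two observations as the paper's proof. First, the entry $C_{ib}$ is counted twice in check $(i,b)$. Second, a row of length $s\ge4$ over three nonzero field elements must repeat a value. The paper uses them directly rather than by contradiction. A nonconstant row $i$, forced by augmentation rank at least two, repeats some value $a$ and contains some entry $C_{ib}=c\ne a$. Check $(i,b)$ then has at least two copies of both $a$ and $c$, so its median count is at least two. This is shorter than your route and avoids the one step where your write-up goes wrong.

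\textbf{The false step.} Your claim that the same argument with $i$ varying shows every column is constant is false. Take rows constantly equal to $1$, $\omega$, $\omega^2$. Every $\mathcal M_{ib}$ then has counts $(s+1,1,1)$, so your contradiction hypothesis holds, yet no column is constant. The dominant class of $\mathcal M_{ib}$ is fixed by the length-$s$ row, not by the length-three column. The step is also unnecessary: constant rows already give $C=(c_1,c_2,c_3)^{\mathsf T}(1,\ldots,1)$, which has rank one. Delete the column claim and conclude \emph{rank one} rather than \emph{constant}.

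\textbf{The case $s=4$.} Here ``at most two exceptions'' alone does not make the row's majority value unique; consider the row $(c,c,d,d)$. What you need, and have already shown, is that each non-dominant class occurs at most once in $\mathcal M_{ib}$. Then $c_{ib}$ is the only value appearing at least twice in row $i$, so it is independent of $b$.

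\textbf{Attainment.} Full augmentation rank is automatic from the pivot, since $\aug(t^{4m})=1$. The real task is keeping every median at two. The paper does this uniformly in $s$ by giving every nonpivot entry coefficient $1$. Then only the pivot's single $\omega$ and single $\omega^2$ differ from $1$, so each check contains at most two copies of either. This gives maximum weight exactly $s+5$ in both the raw and balanced presentations, attained at checks $(1,2)$ and $(2,1)$.
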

\begin{proof}
Every check has $M=s+3$ nonzero coefficients.
The augmentation rank assumption forces a nonconstant coefficient row.
Since there are three nonzero field elements and $s\ge4$, some coefficient $a$ occurs at least twice in that row.
Choose a position $(i,b)$ with coefficient $c\ne a$.
The corresponding check contains both copies of $a$ from the row and two copies of $c$, one from each tensor block.
Its median coefficient count is therefore at least two, proving the balanced bound and hence the raw bound.
For attainment, take the pivot in Appendix~\ref{sec:lp-quaternary-design} and give every nonpivot entry coefficient $1$.
Only the pivot entries with coefficients $\omega$ and $\omega^2$ differ from $1$; every check contains at most two of either, so the raw maximum is $s+5$.
The pivot determinant is $t^{4m}$, giving full augmentation rank.
\end{proof}

The smallest achievable value of $w$ in these binary check presentations is therefore $10$ for $s=5$ and $11$ for $s=6$.
Zeros can reduce check weights; zeros in a nonpivot column also lower the cofactor upper bound on distance.
If a nonpivot column has $q$ nonzero monomial entries, its systematic kernel column has at most $1+6q$ nonzero coefficient positions, because each of the three corrected coordinates is a sum of $q$ cofactors, each with at most two terms.
Scalar averaging therefore bounds the distance by $\lfloor4(1+6q)/3\rfloor$: the cases $q=3,2,1$ give $25,17,9$, respectively.
These bounds use the chosen monomial-determinant pivot; collisions can reduce the actual coefficient counts further.

\input{main_v7_parts/appendices/small_lp_bound}

\subsection{Constituent distance is an upper bound}
\label{sec:lp-metric}
Let $A=(D\ E)$ be an $r\times s$ matrix over a commutative group algebra with $D$ invertible, and use $U,V$ from Eq.~\eqref{eq:lp-regular-seeds}.
For any nonzero $u=Uc\in\ker A$, the cycle $(u\otimes v_j,0)$ has weight $\wt(u)$ and pairs with the systematic $X$ seeds to recover the nonzero vector $c$.
Consequently
\begin{align}
d\bigl(\mathrm{LP}(A,A^\dagger)\bigr)\le d(\ker A).
\end{align}
This statement also holds for quaternary coefficients with binary trace-coordinate weights.
Writing a zero-syndrome vector as an $s\times s$ matrix $Z_1$ and an $r\times r$ matrix $Z_2$ gives $AZ_1=Z_2\overline A$.
Stabilizer vectors have the form $(Y\overline A,AY)$ for an $s\times r$ matrix $Y$.
If $S$ is a right inverse of $A$, adding the stabilizer with $Y=SZ_2$ removes the second block.
This changes the physical representative and can increase its weight.

For a concrete counterexample, take $A=(1\ h)$ over $\F[t]/(t^4-1)$, where $h=1+t+t^3$.
Here $\overline h=h$ and $h^2=1$.
The kernel consists of $(hc,c)$ and has distance $4$: a monomial $c$ attains $4$; weight-two $c$ has nonzero even-weight $hc$; and weight at least three gives total weight at least four.
The quantum code nevertheless has the weight-three logical operator $(Z_1,Z_2)=(I_2,I_1)$, with unit pairing against the systematic $X$ seed.
The expanded check matrix has nonzero, pairwise distinct columns, excluding zero-syndrome words of weights one and two.
Sector exchange therefore gives an exact $[[20,4,3]]$ code.
Choosing $h$ as the pivot also gives a nonmonomial pivot with a sparse inverse, since $h^{-1}=h$.

%% file: main_v7_parts/appendices/small_lp_bound.tex
\subsection{A distance ceiling for short quaternary LP codes}
\label{sec:small-lp-bound}
\begin{proposition}
\label{prop:small-f4-lp-bound}
Let $A\in\mathbb F_4[C_P]^{r\times s}$ have full-row-rank augmentation, $1\le r<s$, and $P=2^a$ with $a\ge0$.
Suppose each entry is zero or a scalar monomial $ct^e$.
If its binary self-adjoint LP code has $n\le600$ and $k/n\ge1/5$, then $d\le14$.
\end{proposition}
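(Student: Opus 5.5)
The plan is to reduce the statement to a finite list of parameter triples $(r,s,P)$. For each triple I would bound the constituent kernel distance, using the inequality $d(\mathrm{LP}(A,A^\dagger))\le d(\ker A)$ of Appendix~\ref{sec:lp-metric}. That inequality holds with binary trace-coordinate weights, and every nonzero $u\in\ker A$ is nontrivial. The reason is that if $A=(D\ E)$ with $D$ invertible (Theorem~\ref{thm:free}), then a kernel vector with zero free part has pivot part in $\ker D=0$. So it suffices to exhibit, in each case, a nonzero binary kernel word of weight at most $14$.

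\emph{Step 1: parameter reduction.} By Eq.~\eqref{eq:lp-f4-parameters}, $k/n=(s-r)^2/(s^2+r^2)$. The condition $k/n\ge1/5$ is equivalent to $2s^2-5sr+2r^2=(2s-r)(s-2r)\ge0$, and since $s>r$ this forces $s\ge2r$. The length condition $n=2P(s^2+r^2)\le600$ then gives $P(s^2+r^2)\le300$ with $s^2+r^2\ge5r^2$. This leaves four regimes with $P$ a power of two:
\begin{itemize}
\item $r=1$ with arbitrary $P$;
\item $r=2$ with $P\le8$;
\item $r=3$ with $P\le4$, since $s\ge6$ gives $s^2+r^2\ge45$;
\item $r\in\{4,5\}$ with $P\le2$, or $r\in\{6,7\}$ with $P=1$.
\end{itemize}

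\emph{Step 2: cofactor words for small $r$.} Pick $r$ pivot columns with invertible augmentation and adjoin one nonpivot column. The vector of signed maximal minors lies in $\ker A$, and it is nonzero because its nonpivot entry is the unit pivot determinant. With scalar-monomial-or-zero entries, each minor has at most $r!$ terms. Hence the number of nonzero coefficient positions satisfies $M\le(r+1)\cdot r!$, and also $M\le(r+1)P$. The lightest of the three scalar multiples has binary weight at most $\lfloor 4M/3\rfloor$ by Eq.~\eqref{eq:f4-physical-weights}. This gives:
\begin{itemize}
\item $r=1$: $M\le2$, so $d\le2$;
\item $r=2$: $M\le6$, so $d\le8$;
\item $P=1$: $M\le r+1\le8$, so $d\le10$;
\item $P=2$ with $r\le5$: $M\le2(r+1)\le12$, so $d\le16$.
\end{itemize}
The last bound is too weak, so for $P=2$ I switch to the dimension argument of Step 3.

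\emph{Step 3: the main obstacle, $r=3$ with $P=4$ (and $P=2$, $r\le5$).} Here the cofactor count only gives $M\le16$, hence $d\le21$. Instead I would use a linear-code bound on the subcode of $\ker A$ supported on the $r+1$ chosen column blocks. Its binary length is $N=2(r+1)P$. Its binary dimension is at least $N-2rP=2P$, because the $r$ quaternary rows impose at most $rP$ $\mathbb F_4$-constraints. For $r=3$, $P=4$ this is a binary $[32,\ge8]$ code. Applying the Griesmer bound to an $[N,K,d]$ code with $N=32$, $K=8$, a hypothetical $d=15$ requires
\begin{align}
\textstyle\sum_{i=0}^{7}\lceil 15/2^i\rceil=15+8+4+2+1+1+1+1=33>32,
\end{align}
so $d\le14$. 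For $P=2$ with $r\le5$ one obtains $[4(r+1),\ge4]$ codes. The Plotkin average bound $N\,2^{K-1}/(2^K-1)$ then gives $d\le\lfloor 4\cdot6\cdot8/15\rfloor=12$. The case $r=3$, $P\le2$ is covered the same way, or directly by Step 2.

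I expect this Griesmer step to be the delicate part. One must check that the dimension count is the binary dimension after trace expansion, which is twice the $\mathbb F_4$-dimension, and that the lightest word of this subcode is a genuine kernel word of the full $A$ (its other coordinates are zero). Combining all cases then gives $d\le14$ throughout.
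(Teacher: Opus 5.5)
Your proof is correct, but for $r\ge3$ it uses a different decomposition than the paper. You share the paper's first ingredients:
\begin{itemize}
\item the embedding $d(\mathrm{LP}(A,A^\dagger))\le d(\ker A)$;
\item the cofactor vector with at most $(r+1)!$ quaternary coefficients, with $4/3$ scalar averaging, for $r=1,2$;
\item the reduction to $s\ge2r$ and $Pr^2\le60$.
\end{itemize}
For every $r\ge3$ at once, however, the paper applies the Griesmer bound to the \emph{entire} binary constituent code $\ker A$, of length $N=2sP$ and dimension $K=2(s-r)P$. Then $N-K=2rP\le24$, while $s\ge2r$ gives $K\ge2rP\ge6$, so distance $15$ would force $N\ge K+25$.

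You instead shorten to the $r+1$ chosen column blocks, obtaining a $[2(r+1)P,2P]$ code. You then split into three cases: the cofactor count for $P=1$, Plotkin averaging for $P=2$, and Griesmer only for $(r,P)=(3,4)$. Your shortened code has the same redundancy $2rP\le24$, so Griesmer on it would also settle the $P\le2$ cases directly. The three-way split is therefore a choice, not a necessity.

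The paper's route is a single uniform inequality. Yours is more local, its dimension count does not use $s$, and it gives sharper bounds in the small cases ($d\le10$ for $P=1$, $d\le12$ for $P=2$). Your two bookkeeping checks are sound. Zero extension from the $r+1$ blocks gives kernel words of $A$ of the same binary weight. Invertibility of $D$ ensures every nonzero kernel vector has nonzero free part, so each such word gives a nontrivial logical operator.
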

\begin{proof}
The weight-preserving logical embedding in Appendix~\ref{sec:lp-metric} bounds the quantum distance by that of the binary constituent code
\begin{align}
\mathcal C=\ker A,\qquad [N,K]=[2sP,2(s-r)P].
\label{eq:small-f4-constituent}
\end{align}
Choose a pivot $D$ with invertible augmentation and adjoin one free column.
The resulting $r+1$ cofactors give a nonzero kernel vector $u$, since its free entry is the unit $\det D$, with at most $(r+1)!$ nonzero quaternary coefficients.
Each nonzero coefficient has binary weights $1,1,2$ under multiplication by the three elements of $\mathbb F_4^\times$.
Some scalar multiple of $u$ therefore has binary weight at most $\lfloor4\operatorname{wt}_4(u)/3\rfloor$, giving $d\le2$ for $r=1$ and $d\le8$ for $r=2$.

For $r\ge3$, Eq.~\eqref{eq:lp-f4-parameters} makes the rate condition equivalent to $s\ge2r$, since
$5(s-r)^2-(s^2+r^2)=2(2s-r)(s-2r)$.
Thus $n\ge10Pr^2$ and $Pr^2\le60$.
Because $P$ is a power of two, $P\in\{1,2,4\}$ and $rP\le12$.
The constituent code has redundancy $N-K=2rP\le24$ and dimension $K\ge2rP\ge6$.
Distance at least $15$ would contradict the binary Griesmer bound~\cite{Griesmer1960}, which requires
\begin{align}
N\ge\sum_{i=0}^{K-1}\left\lceil\frac{15}{2^i}\right\rceil
=15+8+4+2+(K-4)=K+25.
\end{align}
Hence its distance, and therefore the quantum distance, is at most $14$.
\end{proof}

%% file: main_v7_parts/appendices/frame_width_obstructions.tex
\section{Frame-width obstructions}
\label{sec:frame-width-obstructions}
A logical functional can obstruct low-weight spanning, whereas incompatible orbit pairings can obstruct a low-weight regular canonical frame even when each sector has a minimum-weight regular basis.
We give the physical tests supporting these two mechanisms.

\subsection{The functional giving the flagship frame-width bound}
\label{sec:pp-width-detector}
For $\mathcal C^{\mathrm{PP}}_{24,11}$, let $\lambda(z)$ count the occupied coordinates in the union of binary component blocks $6$ and $15$, modulo two, where quaternary column $\ell$ expands into blocks $2\ell,2\ell+1$.
Denote the quaternary coordinate of $z$ in column $\ell$ by $z^{(\ell)}$; then
\begin{align}
\lambda(z)=\Tr\!\left(\omega\aug(z^{(3)})+\omega^2\aug(z^{(7)})\right). \label{eq:detector}
\end{align}
The coefficient array in Eq.~\eqref{eq:c4} satisfies $\omega C_{:,7}+\omega^2 C_{:,3}=0$.
Reversing symmetry then implies that every $Z$ stabilizer has even total support in these two blocks, so $\lambda$ vanishes on $\im H_Z^\dagger$ and is unchanged by multiplication by a $Z$ stabilizer.
The first expanded cofactor seed $\omega z_{a=3}$ has column-$3$ augmentation $\omega$ and zero column-$7$ entry, giving $\lambda=\Tr(\omega^2)=1$.
Thus $\lambda$ induces a nonzero functional on the $Z$ logical quotient.
An exhaustive search excludes zero-syndrome $Z$ operators with $\lambda=1$ through weight $24$.
Equation~\eqref{eq:spanning-dual} therefore gives $f_{\mathrm{span}}^Z\ge25$, establishing the lower bound in Proposition~\ref{prop:width-separation}.
The functional's algebraic properties are finite checks; the exclusion depends on the completed exhaustive run described in Appendix~\ref{sec:distance-algorithm} and preserved in Ref.~\cite{Lee2026RegularFramesRepository}.

\subsection{A parity-functional lower bound}
\label{sec:parity-functional}
A simpler functional counts the total support size modulo two.
If every $Z$ stabilizer has even weight, this parity is unchanged by multiplication by a $Z$ stabilizer.
If some $Z$ logical has odd weight, this parity defines a nonzero functional on $\LZ$.
When the code distance $d$ is even, every logical representative of odd weight has weight at least $d+1$.
The dual characterization and width hierarchy, Eqs.~\eqref{eq:spanning-dual} and~\eqref{eq:frame-width-hierarchy}, then give
\begin{align}
f_{\mathrm{span}}^Z\ge d+1,\qquad f\ge d+1.
\label{eq:parity-functional-bound}
\end{align}
This bound includes canonical bases with mixed Pauli representatives: projection onto the $Z$ sector cannot increase weight and must span that sector.
It applies to the $[[1024,256,24]]$ quaternary PP code with check weight $w=10$, whose supplied odd-weight $Z$ seeds certify that the functional is nonzero.

\input{main_v7_parts/appendices/lp_pairing_obstruction}

%% file: main_v7_parts/appendices/lp_pairing_obstruction.tex
\subsection{Canonical pairing under the prescribed cyclic action}
\label{sec:lp-pairing-obstruction}
Independent low-weight bases in the two logical sectors need not admit a low-weight canonical pairing that respects the prescribed group action.
The following examples separate these requirements within the same LP family.
A \emph{full regular symplectic frame} consists of $q=k/P$ Pauli seed pairs whose translates form a complete canonical basis of logical classes modulo stabilizers and phase, with matching translates paired.
The seeds may mix $X$ and $Z$; $f_{\mathrm{reg}}$ remains restricted to CSS frames.

\begin{proposition}\label{prop:lp-regular-pairing-cost}
For each of the $[[1088,128,d]]$ codes $\mathcal C^{\mathrm{LP}}_{16,9}$ and $\mathcal C^{\mathrm{LP}}_{20,10}$, specified in Eqs.~\eqref{eq:lp-pairing-inputs} and~\eqref{eq:lp-pairing-input20}, respectively, each logical sector has a complete regular basis of maximum weight equal to the distance, so $f_{\mathrm{span}}^X=f_{\mathrm{span}}^Z=d$.
For the specified $C_{16}$ action, however,
\begin{align}
f_{\mathrm{reg}}(\mathcal C^{\mathrm{LP}}_{16,9})=18,
\qquad 21\le f_{\mathrm{reg}}(\mathcal C^{\mathrm{LP}}_{20,10})\le22.
\end{align}
The lower-bound exclusions also hold for full regular symplectic frames whose seed operators need not be pure $X$ or pure $Z$.
The unrestricted canonical width remains undetermined; for $\mathcal C^{\mathrm{LP}}_{16,9}$, it is either $16$ or $18$.
\end{proposition}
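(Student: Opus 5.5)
The plan combines three things: explicit upper-bound frames, a regular basis for each sector at weight $d$, and a computer-assisted exclusion for the lower bounds, organized by the compatibility-graph argument sketched in Sec.~\ref{sec:lp-pairing-main}.

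\emph{Spanning claims.} For each code I would exhibit $q=k/P=8$ $Z$-type operators of weight $d$ whose block-parity residues form a basis of $\ker B_X/\im B_Z^{\mathsf T}$. By Eq.~\eqref{eq:residue} and Theorem~\ref{thm:free}, their translates then form a complete regular basis of $\LZ$. I would do the same in the $X$ sector. Together with $d\le f_{\mathrm{span}}$ from Eq.~\eqref{eq:frame-width-hierarchy}, this gives $f_{\mathrm{span}}^X=f_{\mathrm{span}}^Z=d$.

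\emph{Upper bounds on $f_{\mathrm{reg}}$.} For $\mathcal C^{\mathrm{LP}}_{20,10}$, the systematic frame of Eq.~\eqref{eq:lp-regular-seeds} already gives $f_{\mathrm{reg}}\le22$. For $\mathcal C^{\mathrm{LP}}_{16,9}$, I would exhibit an explicit canonical frame of width $18$. One route is to take low-weight residue-basis seeds and verify $X^\dagger Z=I_8$ directly; otherwise I would search a pool of weight-$\le18$ orbit pairs for a compatible $8$-clique.

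\emph{Lower bounds.} Suppose a full regular (possibly symplectic) frame has width $\le W$, with $W=17$ for $d=16$ and $W=20$ for $d=20$. The plan has three steps.
\begin{enumerate}
\item Enumerate all logical representatives of weight $\le W$ up to translation, using the branching search of Appendix~\ref{sec:distance-algorithm} with overlap pruning disabled so that no representative is lost. The result is a complete pool of orbit masks with their logical coordinates.
\item Form candidate orbit pairs $(x,z)$ such that the pairing polynomial $x^\dagger z$ is a single monomial. Translating $z$ normalizes this to $1$, so $x$ anticommutes with exactly one translate of $z$.
\item Join two candidate pairs $(x,z),(x',z')$ by an edge when all cross pairings $x^\dagger z'$, $x'^\dagger z$, $x^\dagger x'$-type symplectic terms vanish as group-algebra elements. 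A full frame is then an $8$-clique.
\end{enumerate}
Exhibiting a proper coloring with fewer than $8$ colors (five and three respectively) excludes such a clique and proves $f_{\mathrm{reg}}\ge W+1$.

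For $\mathcal C^{\mathrm{LP}}_{16,9}$, this gives $f_{\mathrm{reg}}\ge18$, hence equality. For mixed seeds, the pool consists of Pauli operators whose $X$ and $Z$ parts each have weight $\le W$, and the pairing is symplectic, so the same graph argument applies. The $X$ and $Z$ components can be enumerated separately from the pools above.

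\emph{Unrestricted claim.} For a weight-$17$ operator in a canonical basis of $\mathcal C^{\mathrm{LP}}_{16,9}$, I would argue that multiplying by a stabilizer reduces it to weight $16$ without changing its logical class. This would show that any canonical basis of width $17$ yields one of width $16$. Combined with $f\le f_{\mathrm{reg}}=18$ and $f\ge d=16$, it leaves only $f\in\{16,18\}$.

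The main obstacle is this reduction. It needs a completeness statement: every weight-$17$ logical is stabilizer-equivalent to a weight-$16$ one. I would establish it by checking the enumerated weight-$17$ pool, noting for instance that the stabilizers have even weight, so parity forces odd-weight logicals to pair with lighter representatives. The graph enumeration, and in particular certifying pool completeness, is the other expensive step.
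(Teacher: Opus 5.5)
Your spanning argument, the upper bounds, and the clique-versus-coloring exclusion for pure CSS frames agree with the paper. The paper gets the width-$18$ frame for $\mathcal C^{\mathrm{LP}}_{16,9}$ directly from the systematic seeds of Eq.~\eqref{eq:lp-regular-seeds}, whose kernel columns have weights $(16,16)$ and $(18,16)$, so no search is needed. The gaps are in the two reductions.

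\textbf{The unrestricted claim.} Your parity mechanism runs backwards. If every $Z$ stabilizer had even weight, multiplying by stabilizers would preserve weight parity, so a weight-$17$ logical could \emph{never} be stabilizer-equivalent to a weight-$16$ one. Nothing establishes that premise here anyway, since the maximum check weight $w=9$ is odd. What you actually need is already in your step~1. The complete enumeration through weight $17$ returns $45$ orbits, \emph{all} of weight $16$. So there is no nontrivial pure logical of weight $17$ at all, and every pure basis element of weight at most $17$ already has weight $16$.

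For that enumeration to be complete, note that the branching search of Appendix~\ref{sec:distance-algorithm} stops at zero syndrome and returns only support-irreducible words. Disabling overlap pruning is therefore not enough by itself. You also need that every logical of weight at most $17$ is irreducible: a disjoint kernel splitting would have a nontrivial part of weight at least $16$ and a nonzero part of weight at least $2$, since $H_X$ has no zero column. At cutoff $20=d$, every word is minimum-weight, hence irreducible and safe from overlap pruning.

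\textbf{Mixed Pauli operators.} These enter both the symplectic-frame lower bounds and the unrestricted width $f$. Your claim that ``the same graph argument applies'' with components drawn from the pure pools is unjustified. A component of a mixed logical may be a stabilizer, which is not in the pools, and enlarging the vertex set would invalidate the five- and three-colorings.

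The missing idea is the support-union bound $\mu_{XZ}$ of Eq.~\eqref{eq:lp-support-union}, minimized over all relative cyclic shifts of pool representatives. Since $\mu_{XZ}=26>17$ and $\mu_{XZ}=32>20$, a mixed logical within the cutoff has at most one nontrivial CSS component. The other component is then a stabilizer, and deleting it preserves the logical class and all pairings without increasing weight. Doing this on seeds and then translating preserves the orbit structure, because the group action preserves both CSS sectors and their stabilizer spaces. Each anticommuting orbit pair must then consist of one pure $X$ and one pure $Z$ seed, which gives a CSS frame within the same cutoff.

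This reduces the symplectic case to your pure graphs. Applied operator by operator, it also turns any width-$17$ unrestricted basis of $\mathcal C^{\mathrm{LP}}_{16,9}$ into a width-$16$ one, which is what forces $f\in\{16,18\}$.
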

\begin{proof}
For each input, Ref.~\cite{Lee2026RegularFramesRepository} supplies eight physical $Z$ seeds of weight $d$ whose $16$ translates have zero syndrome and quotient rank $128$ modulo the $Z$ stabilizers.
Tensor-coordinate transposition gives the corresponding $X$ basis.
Direct syndrome and rank checks establish these separate regular bases of the $X$ and $Z$ logical sectors; the distance lower bounds prove their weight optimality.

For the lower bounds on canonical pairing, exhaustively enumerate pure nontrivial $Z$ operators through weight $17$ for $\mathcal C^{\mathrm{LP}}_{16,9}$ and weight $20$ for $\mathcal C^{\mathrm{LP}}_{20,10}$ using Appendix~\ref{sec:distance-algorithm}.
For $\mathcal C^{\mathrm{LP}}_{16,9}$, omit stabilizer-overlap pruning so that all representatives are retained.
Every logical word through weight $17$ is irreducible: a disjoint kernel decomposition would contain a nontrivial component of weight at least $16$ and another nonzero component of weight at least two, since $H_X$ has no zero column.
For $\mathcal C^{\mathrm{LP}}_{20,10}$, every minimum-weight logical word is irreducible, and stabilizer-overlap pruning cannot discard one.
Represent each translation orbit by the support whose binary mask has the smallest integer value under the fixed qubit ordering.
The complete lists of physical orbit representatives, their logical coordinates, and the enumeration commands and completion records are supplied in Ref.~\cite{Lee2026RegularFramesRepository}.
The resulting data are
\begin{center}
\begin{tabular}{@{}>{\raggedright\arraybackslash}m{0.64\columnwidth}rr@{}}
\toprule
 & $\mathcal C^{\mathrm{LP}}_{16,9}$ & $\mathcal C^{\mathrm{LP}}_{20,10}$\\
\midrule
Enumeration weight cutoff & $17$ & $20$\\
Pure $Z$ orbits & $45$ & $75$\\
Weight of every representative & $16$ & $20$\\
$\mu_{XZ}$ of Eq.~\eqref{eq:lp-support-union} & $26$ & $32$\\
Candidate canonical orbit pairs & $177$ & $88$\\
Colors in a verified proper coloring & $5$ & $3$\\
\bottomrule
\end{tabular}
\end{center}
Sector exchange supplies the complete $X$ pools.
For the chosen physical orbit representatives $x_i,z_j$, the support-union minimum includes every relative cyclic translation:
\begin{align}
\mu_{XZ}=\min_{i,j}\min_{0\le h<16}
\bigl|\operatorname{supp}(x_i)\cup\operatorname{supp}(t^h z_j)\bigr|.
\label{eq:lp-support-union}
\end{align}
A common translation preserves this cardinality, so the relative shift covers every pair of physical representatives in the two pools.
The values $26$ and $32$ imply that a mixed logical Pauli operator at or below the respective weight cutoff has at most one CSS component representing a nontrivial logical class.
Removing the other component by multiplication by a stabilizer preserves the logical class and all pairings and cannot increase physical weight.
For a regular frame, choose a pure-$X$ or pure-$Z$ representative for each seed in this way, then translate it throughout its orbit.
The physical group action preserves the CSS sectors and stabilizer spaces, so this operation preserves the orbit organization.
Each canonical orbit pair must then have one pure $X$ and one pure $Z$ partner; exchanging their names if necessary gives a CSS regular frame within the same weight cutoff.
Applied to each operator of any canonical basis, this replacement shows that a width-$17$ unrestricted frame for $\mathcal C^{\mathrm{LP}}_{16,9}$ would give a width-$16$ frame, since every remaining pure representative has weight $16$.
The known width-$18$ frame therefore leaves $f(\mathcal C^{\mathrm{LP}}_{16,9})\in\{16,18\}$.

For orbit representatives $x_i,z_j$, form
\begin{align}
G_{ij}(t)=\sum_{h=0}^{15}B(x_i,t^h z_j) t^h.
\end{align}
A regular canonical pair requires $G_{ij}$ to be a monomial, whose exponent can be removed by translating one partner.
Shifting $x_i,z_j$ to $t^a x_i,t^b z_j$ multiplies $G_{ij}(t)$ by $t^{a-b}$, preserving its zero or monomial character; orbit origins therefore do not affect compatibility.
Make each such pair a graph vertex; join $(i,j)$ and $(i',j')$ precisely when $G_{ij'}=G_{i'j}=0$.
A full regular frame would require a clique of $k/P=8$ vertices.
The verified proper colorings in the table exclude such a clique.
The supplied five- and three-color assignments are checked against every graph edge; optimal colorings are not needed.
The independent checker in Ref.~\cite{Lee2026RegularFramesRepository} reconstructs the graph and relative-shift unions from the physical masks, also checking syndromes, nontriviality, logical coordinates, and the choice of orbit representatives.
These finite checks validate the obstruction conditional on pool completeness; the separate exhaustive runs establish that premise.
The existing canonical frames of widths $18$ and $22$ give the upper bounds.
\end{proof}

Thus the LP lower bounds arise from the compatibility of canonical pairs, even though minimum-weight regular bases already span both sectors.
This differs from the logical hyperplane obstruction for $\mathcal C^{\mathrm{PP}}_{24,11}$ in Proposition~\ref{prop:width-separation}.

%% file: main_v7_parts/appendices/construction_catalogue.tex
\section{Representative constructions and catalogue}
\label{sec:construction-catalogue}
The principal LP and PP inputs below specify the examples used for the optimal-width and pairing-obstruction results.
The versioned catalogue supplies the remaining coefficient and exponent arrays, pivots, physical frames, witnesses, and completed-exclusion records~\cite{Lee2026RegularFramesRepository}.
Every exact distance reported here combines a nontrivial weight-$d$ witness with a completed full-code exclusion through weight $d-1$, in both CSS sectors or with a verified sector-exchange symmetry, as described in Appendix~\ref{sec:distance-algorithm}.
All quaternary constructions use the trace basis and physical-weight conventions of Appendix~\ref{sec:binary-expansion}.

\input{main_v7_parts/appendices/lp_constituents}

\input{main_v7_parts/appendices/pp_reproducibility}
\input{main_v7_parts/appendices/small_quaternary}

%% file: main_v7_parts/appendices/lp_constituents.tex
\subsection{Lifted-product constructions}
\label{sec:lp-inputs}
\label{sec:lp-quaternary-inputs}

Unless stated otherwise, the LP examples use the self-adjoint family $B=A^\dagger$ of Sec.~\ref{sec:lp}, with quaternary inputs expanded as in Appendix~\ref{sec:binary-expansion}.
For each quaternary check, order the binary row choices as its first component, second component, and their sum, and select the two lowest-weight rows, breaking ties in this order at every translate.
For self-adjoint inputs, transposing the tensor coordinates separately in the $s^2$ and $r^2$ blocks exchanges the CSS check spaces, proving $d_X=d_Z$.

\subsubsection{A binary example and the pairing-obstruction inputs}
\label{sec:lp-quaternary-capped-inputs}
The binary $[[1088,128,12]]$ example uses $P=32$ and the exponent array
\begin{align}
\mathcal E_{1088,12}=\begin{pmatrix}
-&0&0&0&0\\0&-&3&2&21\\0&3&13&5&25
\end{pmatrix},
\end{align}
where $e$ denotes $t^e$ and a dash denotes zero.
Its first three columns give a monomial-determinant pivot obeying Eq.~\eqref{eq:lp-pivot-cancel}.
Both systematic kernel columns have weight $12$, so the certified distance gives $d=f=f_{\rm reg}=12$.

The pairing-obstruction codes $\mathcal C^{\mathrm{LP}}_{20,10}$ and $\mathcal C^{\mathrm{LP}}_{16,9}$ of Proposition~\ref{prop:lp-regular-pairing-cost} have the following coefficient and exponent arrays, indexed by $(d,w)$.
Both use $P=16$ and $A_{ij}=C_{ij}t^{E_{ij}}$, with
{\small\setlength{\arraycolsep}{2.75pt}
\begin{align}
C_{20,10}&=\begin{pmatrix}
1&1&1&1&1\\1&1&\omega^2&1&1\\1&\omega&1&\omega^2&1
\end{pmatrix},&
E_{20,10}&=\begin{pmatrix}
0&0&0&0&0\\0&7&14&12&13\\0&14&7&0&10
\end{pmatrix},\label{eq:lp-pairing-input20}\\
C_{16,9}&=\begin{pmatrix}
1&1&1&1&1\\1&1&\omega&0&\omega^2\\1&\omega^2&1&\omega&0
\end{pmatrix},&
E_{16,9}&=\begin{pmatrix}
0&0&0&0&0\\0&6&12&9&3\\0&12&6&2&14
\end{pmatrix}.
\label{eq:lp-pairing-inputs}
\end{align}}
The first three columns have determinants $t^{12}$ and $t^8$, respectively.
For $\mathcal C^{\mathrm{LP}}_{20,10}$, the two kernel columns give binary seed-weight pairs $(20,20)$ and $(22,22)$; for $\mathcal C^{\mathrm{LP}}_{16,9}$, they give $(16,16)$ and $(18,16)$.
Their translated canonical frames therefore have widths $22$ and $18$, providing the upper bounds used in Proposition~\ref{prop:lp-regular-pairing-cost}.

\input{main_v7_parts/appendices/lp_extended_search}

\input{main_v7_parts/appendices/lp_complete_table}

%% file: main_v7_parts/appendices/lp_extended_search.tex
\subsubsection{An optimal distance-22 frame}
The $[[1088,128,22]]$ construction with $w=10$ and $f_{\mathrm{rep}}=22$ uses $P=16$ and the arrays
{\small\setlength{\arraycolsep}{3pt}
\begin{align}
C_{\mathrm{opt}}&=\begin{pmatrix}1&1&1&1&1\\1&1&\omega&1&\omega^2\\1&\omega^2&1&\omega&1\end{pmatrix},&
E_{\mathrm{opt}}&=\begin{pmatrix}0&0&0&0&0\\0&6&12&13&1\\0&12&6&9&5\end{pmatrix}.
\label{eq:lp-q16-arrays}
\end{align}}
The first-three-column determinant is $t^8$.
After combining terms, each systematic kernel column has $M=18$ and $N_\omega=N_{\omega^2}=4$; Eq.~\eqref{eq:f4-physical-weights} gives weight $18+4=22$ for both trace-basis multiples and hence every seed in Eq.~\eqref{eq:lp-regular-seeds}.
The completed full-code exclusion through weight $21$, described in Appendix~\ref{sec:proof-data-cost}, gives $d\ge22$.
The canonical frame supplies weight-$22$ nontrivial logical operators and has reported width $22$, hence $d=f=f_{\mathrm{reg}}=22$.

\subsubsection{Other pivots and check presentations}
\label{sec:lp-short-quaternary}
\label{sec:lp-extended-inputs}
\label{sec:lp-eight}
\label{sec:lp-independent}
The smaller examples show that a sparse canonical frame can use a pivot with a nonmonomial determinant or a lift period that is not a power of two.
The $[[400,16,18]]$ code, with $P=8$, has determinant $\omega^2(1+t)+t^2+t^6+t^7$, a nonmonomial unit whose augmentation is $1$.
The systematic construction using this pivot gives a frame of width $18$, equal to the certified distance.
At odd period $P=7$, the quaternary $[[476,56,13]]$ and $[[476,56,12]]$ codes have monomial pivot determinants $t^2$ and $t^3$, respectively, ensuring invertibility.
Their systematic constructions give full regular frames of widths $18$ and $14$.
These examples lie outside the binary monomial obstruction of Proposition~\ref{prop:odd}.

The other short quaternary entries in Table~\ref{tab:lp-complete} likewise use unit pivots, including nonmonomial determinants; their arrays and frame data are supplied in the repository~\cite{Lee2026RegularFramesRepository}.

Two zero entries in the pivot suffice to lower the maximum check weight to eight while retaining quaternary coefficients and a full regular frame.
The $[[544,64,14]]$ and $[[1088,128,16]]$ codes with $w=8$ use the common coefficient array
\begin{align}
C&=\begin{pmatrix}\omega&0&1&1&1\\0&\omega&1&1&1\\1&1&1&1&1\end{pmatrix},
\label{eq:lp-eight-coefficients}
\end{align}
with periods $8$ and $16$, respectively, and exponent arrays
\begin{align}
E^{(8)}&=\begin{pmatrix}0&0&0&0&0\\0&7&1&0&4\\0&0&3&4&1\end{pmatrix},&
E^{(16)}&=\begin{pmatrix}0&0&0&0&0\\0&4&0&12&8\\0&0&4&10&5\end{pmatrix}.
\label{eq:lp-eight-exponents}
\end{align}
The first three columns have augmentation determinant $\omega^2$, so they are unit pivots at these power-of-two periods.
A check combines a constituent row and column: it has either eight coefficients equal to $1$, seven nonzero coefficients with at most one $\omega$, or six with at most two $\omega$ coefficients.
Both trace-basis rows therefore have weight at most eight.
Equation~\eqref{eq:lp-regular-seeds} gives reported frame widths $22$ and $26$, respectively, at check weight $w=8$.

The check-weight floor for the dense self-adjoint family can be avoided by choosing two constituents independently.
Over $R=\mathbb F_4[C_P]$, take $A,C\in R^{r\times s}$, substitute $B=C^\dagger$ in Eq.~\eqref{eq:lp} and order each constituent's invertible pivot columns first.
Let $U_A$ and $U_C$ denote their systematic kernel matrices, with free-coordinate vectors $v_i$.
The seeds $z_{ij}=(u_{A,i}\otimes v_j,0)$ and $x_{ij}=(v_i\otimes u_{C,j},0)$ again have identity pairing, since each systematic kernel matrix restricts to the identity on its free coordinates.
The binary parameters remain $n=2P(s^2+r^2)$ and $k=2P(s-r)^2$, and the reported width is the largest weight among the trace-basis multiples of the two sets of kernel columns.

An independent-constituent $[[544,64,14]]$ example, distinct from the self-adjoint table entries, uses $P=8$, $r=3$, and $s=5$, with the same exponents in $A$ and $C$ and coefficient arrays related by $\omega\leftrightarrow\omega^2$.
Their first three columns have determinants $\omega t^7$ and $\omega^2t^7$, respectively, and hence give unit pivots.
Each expanded check has weight at most nine: its coefficient multiset contains at most one $\omega$ and one $\omega^2$, with all remaining coefficients equal to $1$.
The complete regular canonical frame has width $16$.

%% file: main_v7_parts/appendices/lp_complete_table.tex
\subsubsection{Parameters and frame widths}
\label{sec:lp-complete-table}
Table~\ref{tab:lp-complete} collects the certified self-adjoint LP examples with full regular frames reported here.
Rows are ordered by $n$, then $d$, then $w$; the field and $(r,s;P)$ specify the constituent dimensions and period.
Equation~\eqref{eq:lp-regular-seeds} supplies their frame seeds from the stated pivots.

\par\medskip\noindent
\begin{minipage}{\columnwidth}
\makeatletter\def\@captype{table}\makeatother
\centering\small\setlength{\tabcolsep}{7pt}
\begin{tabular}{@{}ccrrrr@{}}
\toprule
Field & $(r,s;P)$ & $[[n,k,d]]$ & $w$ & $w_c$ & $f_{\mathrm{rep}}$ \\
\midrule
$\mathbb F_2$ & $(2,4;16)$ & $[[320,64,10]]$ & $9$ & $6$ & ${}^{\star}10$ \\
$\mathbb F_4$ & $(3,4;8)$ & $[[400,16,18]]$ & $9$ & $5$ & ${}^{\star}18$ \\
$\mathbb F_4$ & $(3,4;8)$ & $[[400,16,20]]$ & $9$ & $5$ & $24$ \\
$\mathbb F_2$ & $(4,6;8)$ & $[[416,32,12]]$ & $9$ & $5$ & $13$ \\
$\mathbb F_2$ & $(4,6;8)$ & $[[416,32,12]]$ & $10$ & $6$ & ${}^{\star}12$ \\
$\mathbb F_4$ & $(3,5;7)$ & $[[476,56,12]]$ & $10$ & $6$ & $14$ \\
$\mathbb F_4$ & $(3,5;7)$ & $[[476,56,13]]$ & $10$ & $6$ & $18$ \\
$\mathbb F_4$ & $(3,5;7)$ & $[[476,56,14]]$ & $10$ & $6$ & $20$ \\
$\mathbb F_4$ & $(3,5;8)$ & $[[544,64,14]]$ & $8$ & $5$ & $22$ \\
$\mathbb F_4$ & $(3,5;8)$ & $[[544,64,14]]$ & $9$ & $6$ & $16$ \\
$\mathbb F_4$ & $(3,5;8)$ & $[[544,64,15]]$ & $10$ & $6$ & $22$ \\
$\mathbb F_2$ & $(3,5;32)$ & $[[1088,128,12]]$ & $8$ & $5$ & ${}^{\star}12$ \\
$\mathbb F_4$ & $(3,5;16)$ & $[[1088,128,16]]$ & $8$ & $5$ & $26$ \\
$\mathbb F_4$ & $(3,5;16)$ & $[[1088,128,16]]$ & $9$ & $5$ & $18$ \\
$\mathbb F_4$ & $(3,5;16)$ & $[[1088,128,18]]$ & $10$ & $6$ & $20$ \\
$\mathbb F_4$ & $(3,5;16)$ & $[[1088,128,20]]$ & $10$ & $6$ & $22$ \\
$\mathbb F_4$ & $(3,5;16)$ & $[[1088,128,21]]$ & $10$ & $6$ & $25$ \\
$\mathbb F_4$ & $(3,5;16)$ & $[[1088,128,21]]$ & $11$ & $9$ & $24$ \\
$\mathbb F_4$ & $(3,5;16)$ & $[[1088,128,22]]$ & $10$ & $6$ & ${}^{\star}22$ \\
$\mathbb F_4$ & $(3,6;16)$ & $[[1440,288,17]]$ & $13$ & $10$ & $26$ \\
$\mathbb F_2$ & $(3,7;32)$ & $[[1856,512,12]]$ & $10$ & $7$ & ${}^{\star}12$ \\
$\mathbb F_2$ & $(3,7;32)$ & $[[1856,512,14]]$ & $12$ & $8$ & $16$ \\
$\mathbb F_4$ & $(3,7;16)$ & $[[1856,512,15]]$ & $14$ & $10$ & $25$ \\
$\mathbb F_2$ & $(4,7;32)$ & $[[2080,288,13]]$ & $9$ & $5$ & $14$ \\
\bottomrule
\end{tabular}
\caption{Certified self-adjoint LP codes with full regular frames, in binary coordinates.
Each distance is established by exhaustive exclusion and a logical witness of that weight.
A left superscript $\star$ marks a proved optimum $f_{\mathrm{rep}}=f$.
The maximum check and column weights $w$ and $w_c$ are per CSS sector, with independent check rows.
For each quaternary check, we choose an independent pair of binary rows that minimizes their maximum weight.
The repository maps each row to its construction data and verification records~\cite{Lee2026RegularFramesRepository}.}
\label{tab:lp-complete}
\end{minipage}\par\medskip

%% file: main_v7_parts/appendices/pp_reproducibility.tex
\subsection{Pair-partition constructions}
\label{sec:pp-reproducibility}
The cyclic PP examples use the reversing checks and cofactor seeds of Sec.~\ref{sec:quaternary}, with the pivot and free-column sets specified there.
For these quaternary inputs, full-code distance exclusions cover all sixteen binary translation roots, with complete disjoint branch intervals when a root is partitioned; reflection exchanges the CSS sectors.
The noncyclic example below uses the corresponding product-group symmetry reduction.

\subsubsection{The flagship inputs}
\label{sec:pp1024-input}
At $P=64$, set $(H_X)_{i\ell}=C_{i\ell}t^{e_{i\ell}}$ and use Eq.~\eqref{eq:reverse}, with
{\small\setlength{\arraycolsep}{3.2pt}
\begin{align}
C=\begin{pmatrix}1&\omega^2&1&\omega&\omega&\omega&\omega&\omega^2\\\omega^2&\omega&1&\omega^2&\omega^2&\omega^2&1&1\\\omega^2&1&\omega^2&\omega&1&\omega&\omega&\omega^2\end{pmatrix}. \label{eq:c4}
\end{align}
}
The submatrix formed by the first three columns of $C$ has determinant $1$.
This verifies the full-rank augmentation hypothesis of Sec.~\ref{sec:quaternary}.
Take the exponent arrays
{\small\setlength{\arraycolsep}{3.2pt}
\begin{align}
E^{(24)}&=\begin{pmatrix}0&0&0&0&0&4&6&58\\0&25&50&62&29&24&0&24\\0&50&25&16&10&0&54&31\end{pmatrix}, \label{eq:e24} \\
E^{(25)}&=\begin{pmatrix}0&0&0&0&0&4&38&26\\0&25&18&30&29&24&0&24\\0&18&25&16&42&0&22&63\end{pmatrix}. \label{eq:e4}
\end{align}
}
The array $E^{(24)}$, together with $C$, specifies $\mathcal C^{\mathrm{PP}}_{24,11}$; its weight-eleven check presentation is given below.
Direct substitution verifies CSS commutation for both exponent arrays and gives the minors
\begin{align}
\Delta^{(24)}=t^{36},\qquad \Delta^{(25)}=t^{18}+t^{36}+t^{50}. \label{eq:minors}
\end{align}
Their difference is $t^{18}(1+t^{32})$, whose square vanishes, so both squared pairings equal $t^8$ and use the same normalizing shift.
For $E^{(24)}$, each cofactor column has $M=19$ and $N_\omega=N_{\omega^2}=6$ after combining terms; both trace-basis multiples therefore have weight $19+6=25$ by Eq.~\eqref{eq:f4-physical-weights}, and the normalizing translation preserves it.
The cofactor formula gives full regular canonical frames of widths $25$ and $27$, respectively; their translated binary pairing matrices are $I_{256}$.

For $E^{(24)}$ and $E^{(25)}$, the following invertible row operations give binary check presentations with maximum weight $11$.
For each pair of component rows $(a,b)$ in quaternary rows $0,1,2$, replace it by $(a,a+b)$, $(b,a+b)$, and $(b,a+b)$, respectively, at every cyclic position in both sectors.
These operations preserve the code, distance, and logical frame; the maximum column weight is five.

For an explicit distance witness for $\mathcal C^{\mathrm{PP}}_{24,11}$, number the binary coordinates by $q=(2\ell+b)64+j$, where $0\le\ell<8$, $b=0,1$ indexes $(\omega,\omega^2)$, and $0\le j<64$.
A nontrivial $Z$-type logical operator has support
\begin{align}
\operatorname{supp}(z)=\{\,&15,54,60,79,99,100,114,124,\notag\\
&138,153,178,188,202,217,221,246,\notag\\
&271,285,306,330,335,345,355,548\,\}.
\label{eq:flagship-distance-witness}
\end{align}
Direct binary linear algebra gives $\widehat H_Xz=0$ and $z\notin\im\widehat H_Z^{\mathsf T}$.
Its weight $24$ matches the completed lower-bound exclusion documented in Appendix~\ref{sec:distance-algorithm}; the width-obstructing functional is given in Appendix~\ref{sec:pp-width-detector}.

\subsubsection{Affine constraints for the final lift}
\label{sec:affine-pp-lifts}
The arrays $E^{(24)}$ and $E^{(25)}$ reduce to the same period-$32$ parent.
Their commuting period-$64$ lifts are parameterized by affine equations over $\F$, as follows.
Write $e_j=e_j^{(0)}+hb_j$ at period $2h$, with $b_j\in\F$.
Each exponent $\Lambda_\alpha(e)$ in a commutation expression is an integer linear form in the entries of $e$.
For such an expression $F=\sum_\alpha c_\alpha t^{\Lambda_\alpha(e)}$ vanishing at period $h$, group terms by $c=\Lambda_\alpha(e^{(0)})\bmod h$.
At period $2h$, terms with the same residue $c$ separate between exponents $c$ and $c+h$.
Their combined coefficient already vanishes in the parent construction, so the two coefficients are equal and commutation requires either one to vanish.
Set $\kappa_\alpha=\lfloor \Lambda_\alpha(e^{(0)})/h\rfloor\bmod2$ and $m_\alpha\cdot b=\Lambda_\alpha(b)\bmod2$.
Setting the coefficient of $t^{c+h}$ to zero gives
\begin{align}
\sum_{\alpha:\,\Lambda_\alpha(e^{(0)})\equiv c\ (h)}c_\alpha(\kappa_\alpha+m_\alpha\cdot b)=0. \label{eq:affinelift}
\end{align}
Each field equation gives two affine binary equations.
Applying the same calculation to a prescribed monomial minor imposes affine constraints at intermediate levels.
At the final doubling step, a parent minor $t^{m_0}$ already guarantees a squared pairing $t^{2m_0}$, so no further determinant constraint is needed to keep normalization a translation.

%% file: main_v7_parts/appendices/small_quaternary.tex
\subsubsection{Shorter cyclic and noncyclic constructions}
\label{sec:additional-quaternary}
\label{sec:small-quaternary}
\label{sec:small-512-quaternary}
\label{sec:pp-eight}
\label{sec:small-noncyclic-quaternary}
At $P=16$, a pivot determinant $\Delta_X=t^9$ gives squared pairing $g=t^2$, so normalization preserves weight.
The resulting $[[256,64,12]]$ code has a full regular frame of width $19$.

The five fully populated period-$32$ constructions in Table~\ref{tab:quaternary} use $3\times8$ monomial checks over $\mathbb F_4[C_{32}]$.
Four binary seed pairs generate all $128$ canonical pairs under translation.
Their determinant squares have the form $ct^m$.
A nontrivial field scalar can change binary weight; the tabulated widths are evaluated after normalization and the stated paired basis changes.

The fully populated $3\times8$ family with a unit augmentation pivot cannot attain check weight eight by changing the two binary trace rows of each quaternary check.
Every nonzero field-multiplication block contributes at least one entry to each binary row.
Equality at eight would make every transformed block a permutation matrix.
Distinct field coefficients have a ratio of order three, whereas the ratio of two binary $2\times2$ permutation matrices has order one or two.
Thus each coefficient row would be constant and the augmentation would have rank at most one.
This argument concerns the stated trace presentation, rather than arbitrary stabilizer generating sets.

Allowing one zero in each coefficient row avoids that restriction.
The remaining entries, five $1$'s, one $\omega$, and one $\omega^2$, give binary check weight eight.
Explicit period-$32$ and period-$64$ inputs give $[[512,128,11]]$ and $[[1024,256,12]]$, respectively, both with maximum column weight four.
After group-algebra pairing normalization, their full regular frames have widths $16$ and $15$.
These examples retain rate $1/4$ but have smaller distance than the constructions with check weights $w=9$ or $w=10$ in Table~\ref{tab:quaternary}.

A distinct $[[512,128,19]]$ construction uses $G=C_{16}\times C_2=\langle s,u\rangle$, with $s^{16}=u^2=1$ and componentwise addition of group labels $(a,b)$ modulo $16$ and $2$.
For the coefficient and group-label arrays supplied in the repository, the same pivot columns give
\begin{align}
\Delta_X=u(s^6+s^5)+s^{14},\qquad \Delta_X^2=s^{10}.
\end{align}
Normalization by $s^6$ preserves weight.
All $32$ group translates give cross-Gram matrix $I_{128}$ and frame width $25$, with $w=10$ and $w_c=4$, matching the parameters of one cyclic row in Table~\ref{tab:quaternary}.
The exclusion uses cyclic-subgroup period $16$ and the independently verified $u$ symmetry to cover sixteen translation-root representatives of the full group action.
The product-group checker verifies the physical frame and these symmetries~\cite{Lee2026RegularFramesRepository}.

\subsubsection{A cyclic period that is not a power of two}
\label{sec:pp-period48}
At $P=48$, the $[[768,192,20]]$ construction with check weight $w=9$ has $f_{\mathrm{rep}}=30$ and pivot determinant $\Delta_X=1+t^{24}+\omega^2t^{25}$.
Its square $g=\omega t^2$ certifies invertibility and gives the cofactor frame after normalization by $g^{-1}=\omega^2t^{46}$, although $C_{48}$ is not a $2$-group.

The $[[768,192,22]]$ entry with check weight $w=10$ has $\Delta_X=1+t^{15}+t^{18}+t^{39}+t^{42}$ and $\Delta_X^2=1$; its cofactor frame has width $29$.

Allowing one zero per quaternary check row gives two alternatives with check weight $w=8$: $[[768,192,13]]$ with $f_{\mathrm{rep}}=20$, and $[[768,192,12]]$ with $f_{\mathrm{rep}}=15$.
Their pivot determinants are $t^{16}+\omega^2(1+t^{24})$ and $\omega^2t^9+\omega t^{33}$, respectively, with squares $t^{32}$ and $t^{18}$; both normalizations preserve binary weight.
These two constructions and the example with $w=9$ have maximum column weight four.

\subsubsection{Check-weight variants at 1024 qubits}
\label{sec:weight-nine-quaternary}
\label{sec:weight-ten-quaternary}
A different period-$64$ choice gives $[[1024,256,18]]$ with $w=9$ and $f_{\mathrm{rep}}=26$.
Each coefficient row has six $1$'s, one $\omega$, and one $\omega^2$, so both trace components have check weight nine.
The first three augmentation columns have determinant $\omega$, ensuring full regularity.
The pivot minor and pairing are
\begin{align}
\Delta_9=\omega(t^2+t^{26}+t^{34}),\qquad \Delta_9^2=\omega^2t^{52}.
\end{align}
The raw cofactors have maximum binary weight $26$; normalization by $g^{-1}=\omega t^{12}$ mixes the trace components and gives a canonical frame of width $28$.
Index the four binary seed pairs $((x_1,z_1),\ldots,(x_4,z_4))$ in the order $(3,\omega)$, $(3,\omega^2)$, $(7,\omega)$, $(7,\omega^2)$ of free column and trace-basis element, and apply
\begin{align}
x_3\longleftarrow x_3+x_4,\qquad z_4\longleftarrow z_4+z_3.
\label{eq:weight9-frame-shear}
\end{align}
This paired basis change preserves every translated canonical pairing and gives a canonical frame of width $26$.

The $[[1024,256,24]]$ construction with $w=10$ has four $1$'s, two $\omega$'s, and two $\omega^2$'s in each coefficient row.
The minor
\begin{align}
\Delta_{10}=t^{31}+t^{60}+t^{63},\qquad \Delta_{10}^{2}=t^{56}
\end{align}
gives a full regular canonical frame after shifting the $Z$ seeds by $t^8$.
The normalized seeds have odd weights $25$ or $27$, so $f_{\mathrm{rep}}=27$.
Every $Z$ stabilizer has even weight, whereas the $Z$ seeds have odd weight.
The parity-functional argument in Appendix~\ref{sec:parity-functional} therefore gives
\begin{align}
25\le f\le f_{\rm reg}\le27.
\end{align}